
\documentclass[11pt]{article}

\usepackage[paper=letterpaper, margin=1in]{geometry}

\usepackage{ifthen}

\ifthenelse{\isundefined{\GenerateShortVersion}}
{
\def\LongVersion{}
\def\LongVersionEnd{}
\long\def\ShortVersion#1\ShortVersionEnd{}
}
{
\def\ShortVersion{}
\def\ShortVersionEnd{}
\long\def\LongVersion#1\LongVersionEnd{}
}


\newcommand{\Ignore}[1]{\ignorespaces}

\usepackage{amsmath, amssymb}

\usepackage{amsthm}

\usepackage{ifpdf}

\usepackage{authblk}



\ifpdf
\usepackage[pdftex]{graphicx}
\else
\usepackage[dvips]{graphicx}
\fi

\usepackage[%
  ocgcolorlinks,%
  linkcolor=blue,%
  filecolor=blue,%
  citecolor=blue,%
  urlcolor=blue]{hyperref}

\usepackage{enumitem}

\usepackage{xcolor}

\usepackage{mdframed}

\LongVersion 
\linespread{1.213} 
\LongVersionEnd 

\ShortVersion 
\usepackage{times}
\ShortVersionEnd 

\ShortVersion 
\renewcommand{\paragraph}[1]{\par\noindent\textbf{#1}}
\ShortVersionEnd 

\newtheorem{theorem}{Theorem}[section]
\newtheorem{lemma}[theorem]{Lemma}
\newtheorem{observation}[theorem]{Observation}
\newtheorem{corollary}[theorem]{Corollary}

\theoremstyle{definition}

\theoremstyle{plain}

\newenvironment{DenseItemize}[0]
{\begin{itemize}[nosep, leftmargin=*]}
{\end{itemize}}

\LongVersion 
\newenvironment{MathMaybe}[0]
{\begin{displaymath}\ignorespaces}
{\end{displaymath}}
\LongVersionEnd 
\ShortVersion 

\ShortVersionEnd 

\newenvironment{IntuitionSpotlight}[0]
{\begin{mdframed}[%
backgroundcolor=gray!30,%
topline=false,%
bottomline=false,%
linewidth=1pt]%
\noindent\textbf{Intuition spotlight:}}
{\end{mdframed}}

\newcommand{\Integers}{\mathbb{Z}}
\newcommand{\GrowBall}{\mathtt{GrowBall}}
\newcommand{\Broadcast}{\mathtt{Broadcast}}
\newcommand{\Echo}{\mathtt{Echo}}
\newcommand{\Proceed}{\mathtt{Proceed}}
\newcommand{\RandSymbol}{\mathtt{RandSymbol}}
\newcommand{\SAloops}{\ensuremath{\mathrm{SA}^{\circlearrowleft}}}

\newcommand{\Sect}{Sec.}
\newcommand{\Thm}{Thm.}
\newcommand{\Lem}{Lem.}
\newcommand{\Obs}{Obs.}

\newcommand{\Fig}{Fig.}

\title{Selecting a Leader in a Network of Finite State Machines}

\author[1]{Yehuda Afek\thanks{%
The work of Y.~Afek was partially supported by a grant from the Blavatnik
Cyber Security Council and the Blavatnik Computer Science Research Fund.}}
\author[2]{Yuval Emek\thanks{%
The work of Y.~Emek was supported in part by an Israeli Science
Foundation grant number 1016/17.}}
\author[3]{Noa Kolikant}
\affil[1]{Tel Aviv University.
\texttt{afek@cs.tau.ac.il}}
\affil[2]{Technion.
\texttt{yemek@technion.ac.il}}
\affil[3]{Tel Aviv University.
\texttt{noakolikant@mail.tau.ac.il}}

\date{}

\begin{document}

\begin{titlepage}

\maketitle

\begin{abstract}
This paper studies a variant of the \emph{leader election} problem under the
\emph{stone age} model (Emek and Wattenhofer, PODC 2013) that considers a
network of $n$ randomized finite automata with very weak communication
capabilities (a multi-frequency asynchronous generalization of the
\emph{beeping} model's communication scheme).
Since solving the classic leader election problem is impossible even in more
powerful models, we consider a relaxed variant, referred to as
\emph{$k$-leader selection}, in which a leader should be selected out of at
most $k$ initial candidates.
Our main contribution is an algorithm that solves $k$-leader selection for
bounded $k$ in the aforementioned stone age model.
On (general topology) graphs of diameter $D$, this algorithm runs in
$\tilde{O}(D)$ time and succeeds with high probability.
The assumption that $k$ is bounded turns out to be unavoidable:
we prove that if
$k = \omega (1)$,
then no algorithm in this model can solve $k$-leader selection with a
(positive) constant probability.
\end{abstract}

\textbf{keywords}:
stone age model,
beeping communication scheme,
leader election,
$k$-leader selection,
randomized finite state machines,
asynchronous scheduler

\end{titlepage}

\section{Introduction}
\label{sec:introduction}
Many distributed systems rely on the existence of one distinguishable node,
often referred to as a \emph{leader}.
Indeed, the \emph{leader election} problem is among the most extensively
studied problems in distributed computing
\cite{GallagerHS1983, Awerbuch1987, LavelleeL1990, AfekM1994}.
Leader election is not confined to digital computer systems though as the
dependency on a unique distinguishable node is omnipresent in \emph{biological
systems} as well
\cite{KellerN1993, SetchellCW2005, KuzdzalFickQS2010}.
A similar type of dependency exists also in networks of man-made micro- and
even nano-scale sub-microprocessor devices \cite{DerakhshandehGSRRS2015}.

The current paper investigates the task of electing a leader in networks
operating under the \emph{stone age (SA)} model \cite{EmekW2013} that provides
an abstraction for distributed computing by nodes that are significantly
inferior to modern computers in their computation and communication
capabilities.
In this model, the nodes are controlled by randomized finite automata and can
communicate with their network neighbors using a fixed message alphabet based
on a weak communication scheme that can be viewed as an asynchronous
extension of the \emph{set broadcast (SB)} communication model of
\cite{HellaJKLLLSV2015}
(a formal definition of our model is provided in \Sect{}~\ref{sec:model}).

Since the state space of a node in the SA model is fixed and does not grow
with the size of the network, SA algorithms are inherently \emph{uniform},
namely, the nodes are anonymous and lack any knowledge of the network size.
Unfortunately, classic impossibility results state that leader election is
hopeless in these circumstances (even under stronger computational models):
Angluin \cite{Angluin1980} proved that uniform algorithms cannot solve leader
election in a network with success probability $1$;
Itai and Rodeh \cite{ItaiR1990} extended this result to algorithms that are
allowed to fail with a bounded probability.

Thus, in the distributed systems that interest us, leader election cannot be
solved by the nodes themselves and some ``external help'' is necessary.
This can be thought of as an external \emph{symmetry breaking signal} that
only one node is supposed to receive.
Symmetry breaking signals are actually quite common in reality and can come in
different shape and form.
A prominent example for such external signaling occurs during the development
process of multicellular organisms, when ligand molecules flow through a
cellular network in a certain direction, hitting one cell before the others
and triggering its differentiation \cite{Slack2009}.

But what if the symmetry breaking signal is \emph{noisy} and might be received
by a handful of nodes?
Is it possible to detect that several nodes received this signal?
Can the system recover from such an event or is it doomed to operate with
multiple leaders instead of one?

In this paper, we study the \emph{$k$-leader selection} problem, where at most
$k$ (and at least $1$) nodes are initially marked as \emph{candidates}, out of
which exactly one should be selected.
On top of the relevance of this problem to the aforementioned questions, it is
also motivated by the following application.
Consider scenarios where certain nodes, including the leader, may get lost
during the network deployment process, e.g., a sensor network whose nodes are
dropped from an airplane.
In such scenarios, one may wish to produce
$k > 1$
candidate leaders with the purpose of increasing the probability that at least
one of them survives;
a $k$-leader selection algorithm should then be invoked to ensure that the
network has exactly one leader when it becomes operational.

The rest of the paper is organized as follows.
In \Sect{}~\ref{sec:model}, we provide a formal definition of the distributed
computing model used in the paper.
Our results are summarized in \Sect{}~\ref{sec:results} and some additional
related literature is discussed in \Sect{}~\ref{sec:related-literature}.
A $k$-leader selection algorithm that constitutes our main technical
contribution, is presented in \Sect{}~\ref{sec:algorithm}, whereas
\Sect{}~\ref{sec:negative-results} provides some negative results.

\subsection{Model}
\label{sec:model}
The distributed computing model considered in this paper follows the
\emph{stone age (SA)} model of Emek and Wattenhofer \cite{EmekW2013}.
Under this model, the communication network is represented by a finite
connected undirected graph
$G = (V, E)$
whose nodes are controlled by \emph{randomized finite automata} with state
space $Q$, message alphabet $\Sigma$, and transition function $\tau$ whose
role is explained soon.

Each node
$v \in V$
of degree $d_{v}$ is associated with $d_{v}$ \emph{input ports} (or simply
\emph{ports}), one port $\psi_{v}(u)$ for each neighbor $u$ of $v$ in $G$,
holding the last message
$\sigma \in \Sigma$
received from $u$ at $v$.
The communication model is defined so that when node $u$ sends a message, the
same message is delivered to all its neighbors $v$;
when (a copy of) this message reaches $v$, it is written into port
$\psi_{v}(u)$, overwriting the previous message in this port.
Node $v$'s (read-only) access to its own ports $\psi_{v}(\cdot)$ is very
limited:
for each message type
$\sigma \in \Sigma$,
it can only distinguish between the case where $\sigma$ is not written in any
port $\psi_{v}(\cdot)$ and the case where it is written in at least one port.

The execution is event driven with an asynchronous scheduler that schedules
the aforementioned message delivery events as well as node activation events.%
\footnote{The only assumption we make on the event scheduling is FIFO message
delivery:
a message sent by node $u$ at time $t$ is written into port $\psi_{v}(u)$ of
its neighbor $v$ before the message sent by $u$ at time
$t' > t$.}
When node
$v \in V$
is activated, the transition function
$\tau : Q \times \{0,1\}^{\Sigma} \rightarrow 2^{Q \times \Sigma}$
determines (in a probabilistic fashion) its next state
$q' \in Q$
and the next message
$\sigma' \in \Sigma$
to be sent based on its current state
$q \in Q$
and the current content of its ports.
Formally, the pair
$(q', \sigma')$
is chosen uniformly at random from
$\tau(q, \chi_{v})$,
where 
$\chi_{v} \in \{ 0, 1 \}^{\Sigma}$
is defined so that
$\chi_{v}(\sigma) = 1$
if and only if $\sigma$ is written in at least one port $\psi_{v}(\cdot)$.

To complete the definition of the randomized finite automata, one has to
specify
the set
$Q_{in} \subseteq Q$
of initial states that encode the node's input,
the set
$Q_{out} \subseteq Q$
of output states that encode the node's output,
and
the initial message
$\sigma_{0} \in \Sigma$
written in the ports when the execution begins.
SA algorithms are required to have \emph{termination detection}, namely, every
node must eventually decide on its output and this decision is irrevocable.

Following the convention in message passing distributed computing (cf.\
\cite{Peleg2000}), the \emph{run-time} of an asynchronous SA algorithm is
measured in terms of \emph{time units} scaled to the maximum of the time it
takes to deliver any message and the time between any two consecutive
activations of a node.
Refer to \cite{EmekW2013} for a more detailed description of the SA model.

The crux of the SA model is that the number of states in $Q$ and the size of
the message alphabet $\Sigma$ are constants independent of the size (and any
parameter) of the graph $G$.
Moreover, node $v$ cannot distinguish between its ports and in general, its
degree may be larger than $|Q|$ (and $|\Sigma|$).

\paragraph{Weakening the Communication Assumptions.}
The model defined in the current paper is a restriction of the model of
\cite{EmekW2013}, where the algorithm designer could choose an additional
constant \emph{bounding parameter}
$b \in \Integers_{> 0}$,
providing the nodes with the capability to count the number of ports holding
message
$\sigma \in \Sigma$
up to $b$.
In the current paper, the bounding parameter is set to
$b = 1$.
This model choice can be viewed as an asynchronous multi-frequency variant of
the \emph{beeping} communication model \cite{CornejoK2010, AfekABCHK2011}.

Moreover, in contrast to the existing SA literature, the communication graph
$G = (V, E)$
assumed in the current paper may include \emph{self-loops} of the form
$(v, v) \in E$
which means, in accordance with the definition of the SA model, that node $v$
admits port $\psi_{v}(v)$ that holds the last message received from itself.
Using the terminology of the beeping model literature (see, e.g.,
\cite{AfekABCHK2011}), the assumption that the communication graph is free of
self-loops corresponds to a \emph{sender collision detection}, whereas lifting
this assumption means that node $v$ may not necessarily distinguish its own
transmitted message from those of its neighbors.

It turns out that self-loops have a significant effect on the power of SA
algorithms.
Indeed, while a SA algorithm that solves the \emph{maximal independent set
(MIS)} problem with probability $1$ is presented in \cite{EmekW2013} under the
assumption that the graph is free of self-loops, we prove in
\Sect{}~\ref{sec:negative-results} that if the graph is augmented with
self-loops, then no SA algorithm can solve this problem with a bounded failure
probability.
To distinguish between the original model of \cite{EmekW2013} and the one
considered in the current paper, we hereafter denote the latter by
\SAloops{}.

\subsection{Results}
\label{sec:results}
Throughout, the number of nodes and the diameter of the graph $G$ are denoted
by $n$ and $D$, respectively.
We say that an event occurs \emph{with high probability (whp)} if its
probability is at least
$1 - n^{-c}$
for an arbitrarily large constant $c$.
Our main technical contribution is cast in the following two theorems.

\begin{theorem} \label{thm:leader-selection-positive}
For any constant $k$, there exists a \SAloops{} algorithm that solves the
$k$-leader selection problem in
$\tilde{O} (D)$
time whp.%
\footnote{The asymptotic notation
$\tilde{O} (\cdot)$
may hide
$\log^{O (1)} n$
factors.}
\end{theorem}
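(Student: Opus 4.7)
The plan is to reduce $k$-leader selection to a sequence of $O(\log n)$ epochs of length $O(D)$ each, after which exactly one candidate survives whp. Each epoch is a tournament round: every surviving candidate draws a random symbol via \RandSymbol{} from a constant-size alphabet $\Sigma'$ of size $\Theta(k)$ and floods it throughout the network; a candidate withdraws unless its symbol is the network-wide maximum for the epoch. Since $k$ is constant, a standard balls-and-bins calculation shows that the probability of producing a unique network-wide maximum is at least some positive constant depending only on $k$, so $\Theta(\log n)$ independent epochs suffice to reach a unique survivor whp.

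The core technical primitive I would build is a \Broadcast{}/\Echo{} procedure on top of a \GrowBall{} construction: a candidate initiates a wave that propagates outward level by level, and a return wave of echoes carries the maximum symbol witnessed back to the initiating candidate, so that the candidate detects completion when it receives its own echo. Implementing this in the \SAloops{} setting poses two subtleties. First, because nodes have only a constant number of states and cannot distinguish their ports, the successive wave levels must be encoded by a small rotating set of tags, and each node must defer its contribution to the echo until all of its neighbors have advanced to the next tag. Second, self-loops prevent a node from distinguishing its own transmitted beep from those of its neighbors, which forces the use of an auxiliary handshake so that a node can still detect when all of its neighbors have moved on. A \Proceed{} broadcast between epochs synchronizes the candidates and instructs all nodes to reset their per-epoch state before the next round begins.

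The main obstacle is controlling the interaction of the asynchronous scheduler with multiple concurrent broadcasts: up to $k$ waves, at potentially different levels, may collide at an intermediate node, and that node must merge them correctly using only constant memory while still guaranteeing that every candidate's echo reports the true maximum for its own epoch. The key invariant to maintain is that, at any point during an epoch, the symbol a node writes into its outgoing echo equals the maximum of all symbols it has witnessed in the current epoch; ensuring this invariant, and in particular ensuring that leftover tokens from one epoch cannot contaminate a later epoch, is the most delicate part of the argument. A straightforward amortized analysis then bounds the total run-time by $O(D) \cdot \log^{O(1)} n = \tilde{O}(D)$, yielding Theorem~\ref{thm:leader-selection-positive}.
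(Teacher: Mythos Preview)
Your elimination-by-random-tournament idea is essentially the paper's elimination phase, and your use of \GrowBall{} plus \Broadcast{}/\Echo{} matches the paper's infrastructure. However, there is a genuine gap: you never explain how the algorithm \emph{terminates}. You assert that $\Theta(\log n)$ epochs suffice, but nodes in the \SAloops{} model are finite automata with $O(1)$ states and cannot count to $\log n$; nor do you give any mechanism by which a surviving candidate can verify that it is the \emph{unique} survivor (ties on the epoch maximum are possible) and hence declare itself leader. Without such a mechanism the algorithm either runs forever, violating the termination-detection requirement, or stops after $O(1)$ epochs, yielding only constant success probability rather than whp in $n$. The paper flags precisely this as the central difficulty (see the footnote following the statement of \Thm{}~\ref{thm:leader-selection-positive}): absent termination detection and with failure probability only $k^{-O(1)}$, the problem is trivial.

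The paper resolves this by alternating the elimination phase with a separate \emph{detection phase} whose job is to test, whp, whether more than one candidate remains; the algorithm terminates only when a detection phase reports uniqueness. The detection test compares random symbol streams across ball boundaries, and the key device for making those streams $\Omega(\log n)$ long---without anyone counting---is a \emph{geometric auxiliary condition}: during one designated B\&E iteration, every node independently delays its echo for a geometrically distributed number of rounds. With high probability some node in the graph delays $\Omega(\log n)$ rounds, and the iteration-synchrony argument (\Obs{}~\ref{obs:iteration-synchrony-roots}) then forces every root to spend $\Omega(\log n)$ rounds generating its random stream, so distinct roots are distinguished whp. This idea---using a geometric delay over all $n$ nodes to manufacture an $\Omega(\log n)$ time window inside a constant-state protocol---is the missing ingredient in your proposal.
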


\begin{theorem} \label{thm:leader-selection-negative}
If the upper bound $k$ on the number of candidates may grow as a function of
$n$, then there does not exist a SA algorithm (operating on graphs with no
self-loops) that solves the $k$-leader selection problem with a failure
probability bounded away from $1$.
\end{theorem}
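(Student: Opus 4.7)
The plan is to show, for any SA algorithm $\mathcal{A}$ with state space $Q$ and message alphabet $\Sigma$, that there is an input family on which $\mathcal{A}$ fails with probability tending to $1$ as $n \to \infty$. I would take $G_k$ to be the clique $K_k$ on $k$ vertices (no self-loops), with all vertices marked as candidates, and let the adversarial scheduler activate every node synchronously (synchronous operation is a legal asynchronous schedule under the paper's FIFO assumption). Vertex-transitivity of $K_k$, together with all candidates sharing the initial state $q_0$ and all ports initially holding $\sigma_0$, forces the joint distribution of the $k$ candidate states at every round to be invariant under the $S_k$-action that permutes the candidates; hence only the multiset of candidate states is relevant to the future evolution.

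The technical heart of the argument is a \emph{balance lemma}. Since $\tau$ selects its outcome uniformly from a subset of $Q \times \Sigma$, any positive transition probability is at least $p_{\min} := 1/(|Q|\cdot|\Sigma|)$, a positive constant. I would prove by induction on $t$ that, with probability $1 - o_k(1)$, throughout the first $T = \Theta(\log k)$ rounds and for every state $q \in Q$, the number $N_q(t)$ of candidates in state $q$ is either $0$ or at least $\alpha_0 (p_{\min}/2)^t \cdot k$, where $\alpha_0$ is an appropriate initial constant. For the inductive step: if some populated state $q$ assigns positive probability (hence $\ge p_{\min}$) to sending symbol $\sigma$ under the current port vector, then the $\omega(1)$ occupants of $q$ each attempt to send $\sigma$ independently, so at least one copy of $\sigma$ appears in \emph{every} candidate's port vector whp; if no populated state can send $\sigma$, it is certainly absent. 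Every candidate therefore sees the same deterministic port vector $\chi$ whp, conditional on which the $k$ transitions are mutually independent, and a Chernoff bound preserves the occupancy lower bound up to a factor of $p_{\min}/2$.

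The conclusion is immediate: output states are absorbing by termination detection, so at any round $t \le T$ the number of candidates currently in the ``leader'' output state is either $0$ or at least $\alpha_0(p_{\min}/2)^t \cdot k \ge 2$ whp. Hence $\mathcal{A}$'s success probability on $G_k$ tends to $0$ as $k = n \to \infty$. The principal obstacle is extending the balance lemma past the $O(\log k)$-round horizon: the $(p_{\min}/2)^t$ factor forces the induction to collapse once $t$ exceeds $\log_c k$ for $c = 2/p_{\min}$, so for algorithms whose termination time is superlogarithmic one would need a stronger coupling between the random multiset trajectory and the deterministic dynamical system $f \mapsto T(f)$ on the simplex $\Delta(Q)$ whose attractors govern the long-run occupancy profile, together with an argument ruling out any attractor whose support contains an ``exactly one leader'' profile for large $k$. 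A secondary subtlety is that candidates that have already entered output states still contribute their last sent message to neighbors' port views; this is handled by promoting each output state, together with its frozen outgoing symbol, to a distinct bookkeeping state in the multiset.
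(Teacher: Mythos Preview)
Your proposal has a genuine gap, and you identify it yourself: the balance lemma only controls the first $T = \Theta(\log k)$ rounds, so the conclusion that ``$\mathcal{A}$'s success probability on $G_k$ tends to $0$'' is unjustified for algorithms whose termination on $K_k$ occurs after round $T$ with non-negligible probability. Nothing in the SA model forces early termination, and the very mechanism your lemma exploits --- random splitting of state populations --- is also what drives the occupancy lower bound $\alpha_0(p_{\min}/2)^t k$ below $1$ by round $\Theta(\log k)$. Once some populated state has a single occupant, your key step (``every candidate sees the same deterministic port vector $\chi$'') fails: in $K_k$ without self-loops, a node that is the \emph{unique} sender of some $\sigma$ does not see $\sigma$ in its own ports while every other node does, so asymmetry genuinely leaks in and the exchangeability-plus-concentration picture no longer describes the dynamics. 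Your proposed fix via attractors of the mean-field map on $\Delta(Q)$ is not carried out, and it is precisely in the small-population regime --- where stochastic fluctuations dominate and the mean-field approximation is weakest --- that the argument is needed.

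The paper avoids this time-horizon issue by a completely different route. It reduces $k$-leader selection to a $k$-candidate binary consensus problem and works on \emph{paths} $L_n$ rather than cliques, using an Itai--Rodeh style indistinguishability argument. From the algorithm's behavior on the fixed two-node instance $L_2$ with constant input $b$, one extracts a constant probability $p_b>0$ and a constant round count $\ell_b$ by which both nodes have output $b$; a $(2\ell_b+2)$-node segment of $L_n$ with alternating candidate marks is then locally indistinguishable from $L_2$ for its two middle nodes during those $\ell_b$ rounds, so they output $b$ with constant probability regardless of what happens outside the segment. Concatenating a $b=0$ segment with a $b=1$ segment yields a constant-size gadget that produces inconsistent outputs with a fixed constant probability; embedding many disjoint copies drives the failure probability to $1$. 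Two features of this argument are worth noting against yours: the relevant time bound $\ell_b$ is a fixed constant read off from a fixed tiny instance, so there is no growing horizon to chase; and the number of candidates used is only a (large) constant, so the impossibility holds for every $k(n)=\omega(1)$, whereas your clique construction with all vertices marked implicitly requires $k(n)=n$.
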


We emphasize that the failure probability of the \SAloops{} algorithm promised
in \Thm{}~\ref{thm:leader-selection-positive} (i.e., the probability that the
algorithm selects multiple leaders or that it runs for more than
$\tilde{O} (D)$
time) is inverse polynomial in $n$
even though each individual node does not (and cannot) possess any notion of
$n$ --- to a large extent, this, together with the termination detection
requirement, capture the main challenge in designing the promised
algorithm.\footnote{%
If we aim for a failure probability inverse polynomial in $k$ (rather than $n$)
and we do not insist on termination detection, then the problem is trivially
solved by the algorithm that simply assigns a random ID from a set of size
$k^{O (1)}$
to each candidate and then eliminates a candidate if it encounters an ID
larger than its own.
}
The theorem assumes that
$k = O (1)$
and hides the dependency of the algorithm's parameters on $k$.
A closer look at its proof reveals that our \SAloops{} algorithm uses local
memory and messages of size
$O (\log k)$
bits.
\Thm{}~\ref{thm:leader-selection-negative} asserts that the dependence of
these parameters on $k$ is unavoidable.
Whether this dependence can be improved beyond
$O (\log k)$
remains an open question.

\subsection{Additional Related Literature}
\label{sec:related-literature}
As mentioned earlier, the SA model was introduced by Emek and Wattenhofer
in \cite{EmekW2013} as an abstraction for distributed computing in networks of
devices whose computation and communication capabilities are far weaker than
those of a modern digital computer.
Their main focus was on distributed problems that can be solved in
sub-diameter (specifically,
$\log^{O (1)} n$)
time including MIS, tree coloring, coloring bounded degree graphs, and
maximal matching.
This remained the case also in \cite{EmekU2016}, where Emek and Uitto studied
SA algorithms for the MIS problem in dynamic graphs.
In contrast, the current paper considers the $k$-leader selection problem ---
an inherently global problem that requires
$\Omega (D)$
time.

Computational models based on networks of finite automata have been studied
for many years.
The best known such model is the extensively studied \emph{cellular automata}
that were introduced by Ulam and von Neumann \cite{vonNeumann1966} and became
popular with Martin Gardner's Scientific American column on Conway's
\emph{game of life} \cite{Gardner1970} (see also \cite{Wolfram2002}).

Another popular model that considers a network of finite automata is the
\emph{population protocols} model, introduced by Angluin et
al.~\cite{AngluinADFP2006} (see also
\cite{AspnesR2009, MichailCS2011}),
where the network entities communicate through a sequence of atomic pairwise
interactions controlled by a fair (adversarial or randomized) scheduler.
This model provides an elegant abstraction for networks of mobile devices with
proximity derived interactions and it also fits certain types of chemical
reaction networks \cite{Doty2014}.
Some work on population protocols augments the model with a graph defined over
the population's entities so that the pairwise interactions are restricted to
graph neighbors, thus enabling some network topology to come into play.
However, for the kinds of networks we are interested in, the fundamental
assumption of sequential atomic pairwise interactions may provide the
population protocol with unrealistic advantage over weaker message passing
variants (including the SA model) whose communication schemes do not enable a
node to interact with its individual neighbors independently.
Furthermore, population protocols are typically required to \emph{eventually
converge} to a correct output and are allowed to return arbitrary (wrong)
outputs beforehand, a significantly weaker requirement than the termination
detection requirement considered in this paper.

The neat \emph{amoebot model} introduced by Dolev et al.~\cite{DolevGRS2013}
also considers a network of finite automata in a (hexagonal) grid topology,
but in contrast to the models discussed so far, the particles in this network
are augmented with certain mobility capabilities, inspired by the amoeba
contraction-expansion movement mechanism.
Since its introduction, this model was successfully employed for the
theoretical investigation of self-organizing particle systems
\cite{SOPSWokshop2014, DerakhshandehGRSST2014, DerakhshandehGR2015,
DerakhshandehGSRRS2015, DerakhshandehGR2016, CannonDRR2016,
DaymudeDGPRSS2018},
especially in the context of \emph{programmable matter}.

Leader election is arguably the most fundamental problem in distributed systems
coordination and has been extensively studied from the early days of
distributed computing \cite{GallagerHS1983, FredericksonL1987}.
It is synonymous in most models to the construction of a spanning tree ---
another fundamental problem in distributed computing --- where the root is
typically the leader.
Leader election has many applications including
deadlock detection,
choosing a key/password distribution center,
and
implementing a distributed file system manager.
It also plays a key role in tasks requiring a reliable centralized
coordinating node, e.g., Paxos and Raft, where leader election is used for
consensus --- yet another fundamental distributed computing problem, strongly
related to leader election.
Notice that in our model, leader selection does not (and cannot) imply a
spanning tree, but it does imply consensus.

Angluin \cite{Angluin1980} proved that uniform algorithms cannot break
symmetry in a ring topology with success probability $1$.
Following this classic impossibility result, many symmetry breaking algorithms
(with and without termination detection) that relax some of the assumptions in
\cite{Angluin1980} were introduced
\cite{AbrahamsonAHK1986, AttiyaSW1988, ItaiR1990, SchieberS1994, AfekM1994}.
Itai and Rodeh \cite{ItaiR1990} were the first to design randomized leader
election algorithms with bounded failure probability in a ring topology,
assuming that the nodes know $n$.
Schieber and Snir \cite{SchieberS1994} and Afek and Matias \cite{AfekM1994}
extended their work to arbitrary topology graphs.

\section{\SAloops{} Algorithm for $k$-Leader Selection}
\label{sec:algorithm}
In this section, we present our \SAloops{} algorithm and establish
\Thm{}~\ref{thm:leader-selection-positive}.
We start with some preliminary definitions and assumptions presented in
\Sect{}~\ref{sec:preliminaries}.
\Sect{}\ \ref{sec:ball-growing} and \ref{sec:broadcast-echo} are dedicated to
the basic subroutines on which our algorithm relies.
The algorithm itself is presented in \Sect{}~\ref{sec:main-algorithm}, where
we also establish its correctness.
Finally, in \Sect{}~\ref{sec:run-time}, we analyze the algorithm's run-time.

\subsection{Preliminaries}
\label{sec:preliminaries}
As explained in \Sect{}~\ref{sec:model}, the execution in the SA (and
\SAloops{}) model is controlled by an asynchronous scheduler.
One of the contributions of \cite{EmekW2013} is a SA \emph{synchronizer}
implementation (cf.\ the $\alpha$-synchronizer of Awerbuch
\cite{Awerbuch1985}).
Given a synchronous SA algorithm $\mathcal{A}$ whose execution progresses in
fully synchronized \emph{rounds}
$t \in \Integers_{> 0}$
(with simultaneous wake-up), the synchronizer generates a valid (asynchronous)
SA algorithm $\mathcal{A}'$ whose execution progresses in \emph{pulses} such
that the actions taken by $\mathcal{A}'$ in pulse $t$ are identical to those
taken by $\mathcal{A}$ in round $t$.\footnote{%
We emphasize the role of the assumption that when the execution begins, the
ports hold the designated initial message $\sigma_{0}$.
Based on this assumption, a node can ``sense'' that some of its neighbors have
not been activated yet, hence synchronization can be maintained right from the
beginning.}
The synchronizer is designed so that the asynchronous algorithm $\mathcal{A}'$
has the same bounding parameter $b$
($= 1$
in the current paper) and asymptotic run-time as the synchronous algorithm
$\mathcal{A}$.

Although the model considered by Emek and Wattenhofer \cite{EmekW2013} assumes
that the graph has no self-loops, it is straightforward to apply their
synchronizer to graphs that do include self-loops, hence it can work also in
our \SAloops{} model.
Consequently, in what follows, we restrict our attention to synchronous
\SAloops{} algorithms.
Specifically, we assume that the execution progresses in synchronous rounds
$t \in \Integers_{> 0}$,
where in round $t$, each node $v$ \\
(1) receives the messages sent by its neighbors in round
$t - 1$; \\
(2)
updates its state;
and \\
(3)
sends a message to its neighbors (same message to all neighbors).

Since we make no effort to optimize the size of the messages used by our
algorithm, we assume hereafter that the message alphabet $\Sigma$ is identical
to the state space $Q$ and that node $v$ simply sends its current state to its
neighbors at the end of every round.
Nevertheless, for clarity of the exposition, we sometimes describe the
algorithm in terms of sending designated messages, recalling that this simply
means that the states of the nodes encode these messages.

To avoid cumbersome presentation, our algorithm's description does not get
down to the resolution of the state space $Q$ and transition function $\tau$.
It is straightforward though to implement our algorithm as a randomized finite
automaton, adhering to the model presented in \Sect{}~\ref{sec:model}.
In this regard, at the risk of stating the obvious, we remind the reader that
if $k$ is a constant, then a finite automaton supports arithmetic operations
modulo
$O (k)$.

In the context of the $k$-leader selection problem, we use the verb
\emph{withdraw} when referring to a node that ceases to be a candidate.

\subsection{The Ball Growing Subroutine}
\label{sec:ball-growing}
We present a generic \emph{ball growing} subroutine in graph
$G = (V, E)$
with at most $k$ candidates.
The subroutine is initiated at (all) the candidates, not necessarily
simultaneously, through designated signals discussed later on.
During its execution, some candidates may withdraw;
in the context of this subroutine, we refer to the surviving candidates
as \emph{roots}.

The ball growing subroutine assigns a \emph{level} variable
$\lambda(v) \in \{ 0, 1, \dots, M - 1 \}$
to each node $v$,
where
$M = 2 k + 2$.
Path
$P = (v_{1}, \dots, v_{q})$
in $G$ is called \emph{incrementing} if
$\lambda(v_{j + 1}) = \lambda(v_{j}) + 1 \bmod M$
for every
$1 \leq j \leq q - 1$.
The set of nodes reachable from a root $r$ via an incrementing path is referred
to as the \emph{ball} of $r$, denoted by $B(r)$.
We design this subroutine so that the following lemma holds.

\begin{lemma} \label{lem:ball-growing}
Upon termination of the ball growing subroutine, \\
(1)
every incrementing path is a shortest path (between its endpoints) in $G$; \\
(2)
every root belongs to exactly one ball (its own); and \\
(3)
every non-root node belongs to at least one ball.
\end{lemma}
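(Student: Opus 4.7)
The plan is to design the subroutine so that, upon termination, the levels coincide with BFS distances modulo $M$; specifically, I would aim for the invariant $\lambda(v) \equiv d(v, R) \pmod{M}$ for every node $v \in V$, where $R \subseteq V$ denotes the set of roots and $d(v, R) = \min_{r \in R} d(v, r)$ is the graph distance from $v$ to the nearest root in $G$. With this invariant in hand, all three properties of the lemma follow from a single short chain of reasoning, so my approach is to reduce the entire proof to establishing the invariant.

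Assuming the invariant, here is how each property falls out. For (1), let $P = (v_1, \ldots, v_q)$ be an incrementing path. For adjacent vertices, $d(v_{j+1}, R) - d(v_j, R) \in \{-1, 0, +1\}$; since $M = 2k + 2 \geq 4$, these three integers are pairwise distinct modulo $M$, so the condition $\lambda(v_{j+1}) \equiv \lambda(v_j) + 1 \pmod{M}$ forces $d(v_{j+1}, R) - d(v_j, R) = +1$ exactly. Summing along $P$ gives $d(v_q, R) - d(v_1, R) = q - 1$, and the triangle inequality $d(v_q, R) \leq d(v_1, R) + d(v_1, v_q)$ then yields $d(v_1, v_q) \geq q - 1$; the reverse inequality is witnessed by $P$ itself, so $P$ is a shortest path. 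Property (2) uses the same identity: an incrementing path between two distinct roots $r', r$ would give $d(r, R) - d(r', R) = q - 1 \geq 1$, but both distances vanish since $r, r' \in R$, so $q = 1$ and $r' = r$; the trivial single-vertex path yields $r \in B(r)$. For (3), take a non-root $v$, let $r^{*}$ be a nearest root, and consider a shortest path $r^{*} = w_0, w_1, \ldots, w_d = v$. A standard triangle-inequality argument (applied to $d(v, R) \leq d(v, w_i) + d(w_i, R)$ together with $d(w_i, R) \leq i$) gives $d(w_i, R) = i$ for every $i$, so the levels along this path are $0, 1, \ldots, d$ modulo $M$, making it an incrementing path from $r^{*}$ to $v$.

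The main obstacle, and the step I expect to consume the bulk of the technical work, is establishing the BFS-distance-mod-$M$ invariant itself rather than the derivation of (1)--(3) sketched above. The subroutine runs asynchronously, candidates initiate at possibly different times, and some may withdraw during the execution, so a naive ``simultaneous BFS wave'' argument will not suffice. I would establish the invariant inductively on a suitably defined notion of wavefront progress, arguing that each node fixes its level exactly once and that the fixed value equals the then-current distance to the set of initiated candidates. The delicate part will be ruling out that a candidate's withdrawal retroactively invalidates level assignments already made; I would handle this (once the subroutine is specified) by ensuring that each withdrawal is accompanied by a certification that some surviving candidate is at least as close to every affected node, so that the relevant BFS distance cannot increase when a candidate disappears.
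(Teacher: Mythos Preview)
Your reduction to the invariant $\lambda(v)\equiv d(v,R)\pmod{M}$ does not go through for the subroutine as defined in the paper, and this is a genuine gap, not just a missing detail. Two features of the subroutine break the invariant outright. First, a root $r$ is assigned $\lambda(r)=0$ or $\lambda(r)=1$ according to the parity of its invocation round $s(r)$, so already at the roots you can have $\lambda(r)=1$ while $d(r,R)=0$. Second, because different candidates may be signaled in different rounds, a non-root node $v$ can first be reached by the wave of a root that is \emph{not} its nearest one, and the rule for choosing $\lambda(v)$ (pick $\ell'$ with $\ell'-1\in L$ and $\ell'+1\notin L$) may then force a level that does not match $d(v,R)$. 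Concretely: take $r_{1}$ signaled in round $0$ (so $\lambda(r_{1})=0$) and $r_{2}$ signaled in round $2$ (so $\lambda(r_{2})=0$), with $v$ at distance $3$ from $r_{1}$ and distance $1$ from $r_{2}$. Both waves hit $v$ in round $3$ with $L=\{0,2\}$; the only admissible choice is $\lambda(v)=3$, not $1=d(v,R)$. Worse for your argument for (3): the edge $(r_{2},v)$ is then \emph{not} incrementing, so $v\notin B(r_{2})$ even though $r_{2}$ is its nearest root; $v$ lies only in $B(r_{1})$. Hence the shortest path from the nearest root need not be incrementing, and your derivation of (3) collapses along with the invariant.

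The paper does not attempt any global distance-encoding invariant. Instead it proves, by simultaneous induction on rounds, a bundle of local properties: that $\lambda(v)$ has the same parity as the activation round $p(v)$; that at most $k$ distinct $\GrowBall$ arguments circulate in any single round (so a valid $\ell'$ always exists, since $M=2k+2$); that every node acquires a parent unless it is a root; and that each node lies in some ball via incrementing paths that are shortest paths. The ``shortest path'' claim is obtained not from a distance formula for $\lambda$, but from the fact that along any incrementing path the activation rounds $p(\cdot)$ increase by exactly one per step, which bounds the path length by the graph distance. If you want to salvage your plan, the workable global quantity is $p(v)$ (the round of first activation), not $d(v,R)$; but then you are essentially reconstructing the paper's inductive argument rather than reducing everything to a single clean congruence.
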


\begin{IntuitionSpotlight}
A natural attempt to design the ball growing subroutine is to grow a breadth
first search tree around candidate $r$, layer by layer, so that node $v$ at
distance $d$ from $r$ is assigned with level variable
$\lambda(v) = d \bmod M$.
This is not necessarily possible though when multiple candidates exist:
What happens if the ball growing processes of different candidates reach $v$
in the same round?
What happens if these ball growing processes reach several adjacent nodes in
the same round?
If we are not careful, these scenarios may lead to incrementing paths that are
not shortest paths and even to cyclic incrementing paths.
Things become even more challenging considering the weak communication
capabilities of the nodes that may prevent them from distinguishing between
the ball growing processes of different candidates.
\end{IntuitionSpotlight}

The ball growing subroutine is implemented under the \SAloops{} model by
disseminating $\GrowBall(\ell)$ messages,
$\ell \in \{ 0, 1, \dots, M - 1 \}$,
throughout the graph.
Consider a candidate $r$ and let $s(r)$ be the round in which it is signaled to
invoke the ball growing subroutine.
If $r$ receives a $\GrowBall(\cdot)$ message in some round
$t \leq s(r)$,
then $r$ withdraws and subsequently follows the protocol like any other
non-root node;
otherwise, $r$ becomes a root in round $s(r)$.
If $s(r)$ is even (resp., odd), then $r$
assigns
$\lambda(r) \leftarrow 0$
(resp.,
$\lambda(r) \leftarrow 1$)
and sends a
$\GrowBall(\lambda(r))$
message.

Consider a non-root node $v$ and let $g(v)$ be the first round in which it
receives a $\GrowBall(\cdot)$ message.
Notice that $v$ may receive several $\GrowBall(\ell)$ messages with different
arguments $\ell$ in round $g(v)$ --- let $L$ be the set of all such arguments
$\ell$.
Node $v$ assigns
$\lambda(v) \leftarrow \ell'$
and sends a $\GrowBall(\ell')$ message at the end of round $g(v)$, where
$\ell'$ is chosen to be any integer in
$\{ 0, 1, \dots, M - 1 \}$
that satisfies: \\
(i)
$\ell' - 1 \bmod M \in L$;
and \\
(ii)
$\ell' + 1 \bmod M \notin L$. \\
This completes the description of the ball growing subroutine.
Refer to \Fig{}~\ref{fig:three-balls} for an illustration.

\begin{figure}
\begin{center}
\includegraphics[width=\textwidth]{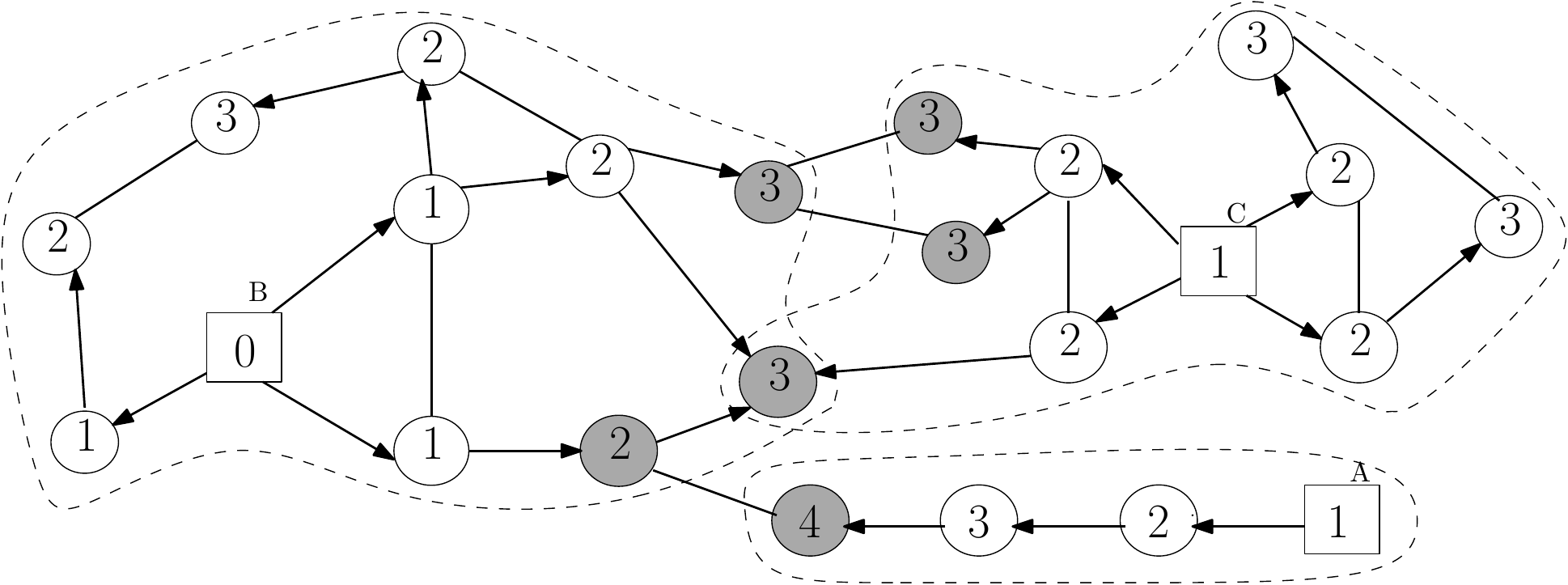}
\end{center}
\caption{\label{fig:three-balls}%
The result of a ball growing process invoked at
candidate A in round $1$,
candidate B in round $2$, and
candidate C in round $3$.
The level variables $\lambda(\cdot)$ are depicted by the numbers written
inside the nodes and the balls are depicted by the dashed curves.
The boundary nodes appear with a gray background.
The DAG $\vec{G}$ is depicted by the oriented edges.
}
\end{figure}

\begin{IntuitionSpotlight}
Condition (i) ensures that $v$ joins the ball $B(r)$ of some root $r$.
By condition (ii), nodes do not join $B(r)$ ``indirectly'' (this could have
led to incrementing paths that are not shortest paths).
\end{IntuitionSpotlight}

\begin{proof}[Proof of \Lem{}~\ref{lem:ball-growing}]
Consider a (root or non-root) node
$v \in V$
and let $p(v)$ be the round in which $v$ starts its active participation in
the ball growing process.
More formally, if $v$ is a root (i.e., it is a candidate signaled to invoke
the ball growing subroutine strictly before receiving any $\GrowBall(\cdot)$
message), then
$p(v) = s(v)$;
otherwise,
$p(v) = g(v)$.
The following properties are established by (simultaneous) induction on the
rounds:
\begin{DenseItemize}

\item
In any round
$t \geq p(v)$,
variable $\lambda(v)$ is even if and only if $p(v)$ is even.

\item
In any round
$t \geq p(v)$,
node $v$ has a neighbor $u$ with
$\lambda(u) = \lambda(v) - 1 \bmod M$
if and only if $v$ is not a root.

\item
In any round
$t \geq p(v)$,
node $v$ belongs to ball $B(r)$ for some root $r$.

\item
In any round
$t \geq p(v)$,
if
$v \in B(r)$
for some root $r$, then the incrementing path(s) that realize this relation
are shortest paths in the graph.

\item
If
$u, v \in B(r)$
for some root $r$ and
$p(u) = p(v)$,
then
$\lambda(u) = \lambda(v)$.

\item
The total number of different arguments $\ell$ in the
$\GrowBall(\ell)$ messages sent during a single round is at most $k$.

\item
Non-root node $v$ finds a valid value to assign to $\lambda(v)$ in round
$g(v) = p(v)$.

\end{DenseItemize}
The assertion follows.
\end{proof}

\begin{observation} \label{obs:ball-growing-run-time}
If $t$ is the earliest round in which the ball growing process is initiated at
some candidate, then the process terminates by round
$t +  O (D)$.
\end{observation}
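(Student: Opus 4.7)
The plan is to show that the $\GrowBall$ wave initiated by the earliest-signaled candidate propagates across $G$ at a rate of one hop per round, forcing every node to complete its (one-shot) action in the subroutine within $D$ further rounds.

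Let $r_{0}$ denote a candidate with $s(r_{0}) = t$. Since $t$ is the earliest round in which any candidate is signaled, and since a node sends a $\GrowBall(\cdot)$ message only after either being signaled (as a root) or receiving such a message itself, no $\GrowBall(\cdot)$ message can exist in the system before round $t$. In particular, $r_{0}$ cannot receive any $\GrowBall(\cdot)$ message in a round $\leq t = s(r_{0})$, so $r_{0}$ does not withdraw; it becomes a root in round $t$, assigns $\lambda(r_{0}) \in \{0, 1\}$ according to the parity of $t$, and dispatches its $\GrowBall(\lambda(r_{0}))$ message at the end of round~$t$.

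The main step is to prove, by induction on $d = \mathrm{dist}_{G}(r_{0}, v)$, that every node $v \in V$ assigns $\lambda(v)$ and sends its (single) $\GrowBall$ message no later than the end of round $t + d$. The base case $d = 0$ is handled by the observation above. For the inductive step, fix $v$ at distance $d \geq 1$ and a neighbor $u$ at distance $d - 1$; by hypothesis, $u$ sends its $\GrowBall$ message by the end of round $t + d - 1$, so $v$ receives at least one $\GrowBall$ message by the start of round $t + d$, giving $g(v) \leq t + d$ if $v$ ever plays the non-root role. I would then case-split: if $v$ is not a candidate, or $v$ is a candidate with $s(v) \geq g(v)$ (hence $v$ withdraws), then $v$ participates as a non-root and completes in round $g(v) \leq t + d$; if instead $v$ is a candidate with $s(v) < g(v)$, then $v$ becomes a root and completes in round $s(v)$, which satisfies $t \leq s(v) < g(v) \leq t + d$.

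Specializing to $d \leq D$ shows that every node completes its participation by round $t + D$. As a consequence, any candidate $r$ with $s(r) > t + D$ must have received a $\GrowBall(\cdot)$ message in some round $\leq t + D < s(r)$ and therefore withdraws, so no fresh $\GrowBall$ messages are generated after round $t + D$. Since each node transmits its $\GrowBall$ message exactly once, the whole subroutine terminates by round $t + D = t + O(D)$, as claimed. The only delicate point is the case analysis in the inductive step, where one must reconcile the signaling round $s(\cdot)$ with the first-receipt round $g(\cdot)$; the remainder is a standard breadth-first wave-propagation argument, independent of the particular level values $\ell'$ selected at each node.
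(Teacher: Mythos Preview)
Your argument is correct. The paper states this result as an observation and offers no proof, implicitly relying on exactly the breadth-first wave-propagation picture you spell out: once the earliest-signaled candidate becomes a root in round $t$, the $\GrowBall$ front advances one hop per round, so every node at distance $d$ has $p(v) \leq t + d$ and the subroutine finishes by round $t + D$. Your case analysis reconciling $s(\cdot)$ with $g(\cdot)$ is the right way to make this precise, and the only external fact you implicitly use---that a non-root node always finds a valid $\ell'$ in round $g(v)$ and hence actually sends in that round---is supplied by the proof of \Lem{}~\ref{lem:ball-growing}.
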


\paragraph{Boundary Nodes.}
We will see in \Sect{}~\ref{sec:main-algorithm} that our algorithm detects
candidate multiplicity by identifying the existence of multiple balls in the
graph.
The key notion in this regard is the following one (see
\Fig{}~\ref{fig:three-balls}):
Node $v$ is said to be a \emph{boundary} node if \\
(1)
$v \in B(r) \cap B(r')$
for roots
$r \neq r'$;
or \\
(2)
$v \in B(r)$
for some root $r$ and there exists a neighbor $v'$ of $v$ such that
$v' \notin B(r)$.

\begin{observation} \label{obs:boundary-nodes}
If the graph has multiple roots, then every ball includes at least one
boundary node.
\end{observation}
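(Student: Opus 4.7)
The plan is to prove the observation by contradiction, leveraging only the connectedness of $G$ together with parts (2) and (3) of \Lem{}~\ref{lem:ball-growing}.

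Suppose, toward a contradiction, that there exist two distinct roots $r \neq r'$ yet some ball $B(r)$ contains no boundary node. I would then unpack what ``no boundary node'' means for an arbitrary $v \in B(r)$: by the negation of clause~(1) in the definition, $v$ does not lie in $B(r'')$ for any root $r'' \neq r$; by the negation of clause~(2), every neighbor of $v$ in $G$ also lies in $B(r)$. Thus $B(r)$ is closed under the neighbor relation.

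Since $B(r)$ is nonempty (it contains $r$ itself, by \Lem{}~\ref{lem:ball-growing}(2)) and $G$ is connected, closure under neighbors forces $B(r) = V$. In particular, $r' \in B(r)$. But \Lem{}~\ref{lem:ball-growing}(2) states that every root belongs to exactly one ball --- its own --- so $r'$ can only lie in $B(r')$, not in $B(r)$. This is the desired contradiction.

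I do not expect any real obstacle here: the argument is essentially a one-line connectedness argument once the two defining conditions of a boundary node are correctly negated. The only thing to be careful about is to invoke \emph{both} clauses of the boundary definition --- clause~(2) is what gives closure under neighbors, and clause~(1) plus part~(2) of \Lem{}~\ref{lem:ball-growing} is what rules out the alternative escape route that $r'$ might silently coexist inside $B(r)$.
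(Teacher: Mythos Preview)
Your argument is correct and is exactly the natural proof: the paper states this as an observation with no proof, and the connectedness-plus-closure argument you give (using both clauses of the boundary definition together with parts~(2) and~(3) of \Lem{}~\ref{lem:ball-growing}) is the intended reasoning.
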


Node $v$ is said to be a \emph{locally observable boundary} node if it has a
neighbor $v'$ such that
$\lambda(v') \notin \{ \lambda(v) + \ell \bmod M \mid \ell = -1, 0, +1 \}$.
Notice that by \Lem{}~\ref{lem:ball-growing}, there cannot be a ball that
includes both $v$ and $v'$ since then, at least one of the incrementing paths
that realize these inclusions is not a shortest path.
Therefore, a locally observable boundary node is in particular a boundary
node.

\paragraph{The Directed Acyclic Graph $\vec{G}$.}
Given two adjacent nodes $u$ and $v$, we say that $v$ is a \emph{child} of
$u$ and that $u$ is a \emph{parent} of $v$ if
$\lambda(v) = \lambda(u) + 1 \bmod M$;
a childless node is referred to as a \emph{leaf}.
This induces an orientation on a subset $F$ of the edges, say, from parents to
their children (up the incrementing paths), thus introducing a directed graph
$\vec{G}$ whose edge set is an oriented version of $F$ (see
\Fig{}~\ref{fig:three-balls}).
\Lem{}~\ref{lem:ball-growing} guarantees that $\vec{G}$ is acyclic (so, it is
a directed acyclic graph, abbreviated \emph{DAG}) and that it spans all nodes
in $V$.
Moreover, the sources and sinks of $\vec{G}$ are exactly the roots and leafs
of the ball growing subroutine, respectively, and the source-to-sink distances
in $\vec{G}$ are upper-bounded by the diameter $D$ of $G$.

We emphasize that the in-degrees and out-degrees in $\vec{G}$ are unbounded.
Nevertheless, the simplifying assumption that the messages sent by the
nodes encode their local states, including the level variables
$\lambda(\cdot)$ (see \Sect{}~\ref{sec:preliminaries}), ensures that node $v$
can distinguish between messages received from its children, messages received
from its parents, and messages received from nodes that are neither children
nor parents of $v$.

\subsection{Broadcast and Echo over $\vec{G}$}
\label{sec:broadcast-echo}
The assignment of level variables $\lambda(\cdot)$ by the ball growing
subroutine and the child-parent relations these variables induce provide a
natural infrastructure for \emph{broadcast and echo (B\&E)} over the
aforementioned DAG $\vec{G}$ so that the broadcast (resp., echo) process
progresses up (resp., down) the incrementing paths.
These are implemented based on $\Broadcast$ and $\Echo$ messages as follows.

The broadcast subroutine is initiated at (all) the roots, not necessarily
simultaneously, through designated signals discussed later on and root $r$
becomes \emph{broadcast ready} upon receiving such a signal.
A non-root node $v$ becomes broadcast ready in the first round in which
it receives $\Broadcast$ messages from all its parents.
A (root or non-root) node $v$ that becomes broadcast ready in round
$t^{b}_{0} = t^{b}_{0}(v)$
keeps sending $\Broadcast$ messages throughout the round interval
$[t^{b}_{0}, t^{b}_{1})$,
where
$t^{b}_{1} = t^{b}_{1}(v)$
is defined to be the first round (strictly) after $t^{b}_{0}$ in which \\
(i) $v$ receives $\Broadcast$ messages from all its children; and \\
(ii) $v$ does not receive a $\Broadcast$ message from any of its
parents. \\
(Notice that conditions (i) and (ii) are satisfied vacuously for the leaves
and roots, respectively.)

The echo subroutine is implemented in a reversed manner:
It is initiated at (all) the leaves, not necessarily simultaneously, after
their role in the broadcast subroutine ends so that leaf $v$ becomes
\emph{echo ready} in round $t^{b}_{1}(v)$.
A non-leaf node $v$ becomes \emph{echo ready} in the first round in which
it receives $\Echo$ messages from all its children.
A (leaf or non-leaf) node $v$ that becomes echo ready in round
$t^{e}_{0} = t^{e}_{0}(v)$
keeps sending $\Echo$ messages throughout the round interval
$[t^{e}_{0}, t^{e}_{1})$,
where
$t^{e}_{1} = t^{e}_{1}(v)$
is defined to be the first round (strictly) after $t^{e}_{0}$ in which \\
(i) $v$ receives $\Echo$ messages from all its parents; and \\
(ii) $v$ does not receive an $\Echo$ message from any of its
children. \\
(Notice that conditions (i) and (ii) are satisfied vacuously for the roots and
leaves, respectively.)

\begin{lemma} \label{lem:broadcast-echo}
The following properties hold for every B\&E process:
\begin{DenseItemize}

\item
Rounds
$t^{b}_{0}(v)$, $t^{b}_{1}(v)$, $t^{e}_{0}(v)$, and $t^{e}_{1}(v)$
exist and
$t^{b}_{0}(v) < t^{b}_{1}(v) \leq t^{e}_{0}(v) < t^{e}_{1}(v)$
for every node $v$.

\item
If node $v$ is reachable from node
$u \neq v$
in DAG $\vec{G}$, then
$t^{b}_{i}(u) < t^{b}_{i}(v)$
and
$t^{e}_{i}(u) > t^{e}_{i}(v)$
for
$i \in \{ 0, 1 \}$.

\item
If $t$ is the latest round in which the process is initiated at some root,
then the process terminates by round
$t +  O (D)$.

\end{DenseItemize}
\end{lemma}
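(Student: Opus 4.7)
The plan is to prove the three properties by a single joint induction on the DAG $\vec{G}$ produced by the ball growing subroutine. Recall from the discussion following \Lem{}~\ref{lem:ball-growing} that $\vec{G}$ is a finite DAG whose sources are exactly the roots, whose sinks are exactly the leaves, and whose source-to-sink distances are bounded by $D$. Since $\vec{G}$ is acyclic and finite, topological induction is available, and I will simultaneously track $t^{b}_{0}, t^{b}_{1}, t^{e}_{0}, t^{e}_{1}$ to obtain existence, ordering, and the run-time bound in one pass.

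For the broadcast phase, I will induct on the topological depth in $\vec{G}$ (distance from the nearest source) and establish the strengthened invariant that (a) $t^{b}_{0}(v)$ is finite, equals $\max_{u \in \mathrm{par}(v)} t^{b}_{0}(u) + 1$ for non-root $v$, and equals the signal round for a root; and (b) for every edge $u \to v$ in $\vec{G}$, the parent $u$ is still broadcasting at round $t^{b}_{0}(v)$, i.e., $t^{b}_{0}(u) < t^{b}_{0}(v) \leq t^{b}_{1}(u)$. Part~(a) immediately yields the $t^{b}_{0}$ half of property~(2). For part~(b), the defining condition~(i) of $t^{b}_{1}(u)$ requires every child of $u$, and in particular $v$, to have started broadcasting; this forces $t^{b}_{1}(u) > t^{b}_{0}(v)$ and guarantees that the broadcast intervals of all parents of $v$ overlap at round $t^{b}_{0}(v)$, so $v$ does indeed become broadcast ready there. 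Condition~(ii) of $t^{b}_{1}(u)$ is handled by a parallel upward pass, using the inductively established finiteness of $t^{b}_{1}$ for $u$'s own parents.

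The echo phase is handled by an analogous induction in reverse topological order, starting at the leaves where $t^{e}_{0}(v) = t^{b}_{1}(v)$ by definition, and showing that $t^{e}_{0}(v) = \max_{w \in \mathrm{ch}(v)} t^{e}_{0}(w) + 1$ for non-leaf $v$, together with $t^{e}_{0}(w) < t^{e}_{0}(v) \leq t^{e}_{1}(w)$ for every DAG edge $w \to v$ reversed. The coupling between phases — in particular the inequality $t^{b}_{1}(v) \leq t^{e}_{0}(v)$ claimed in property~(1) — follows because $v$ stops broadcasting only after each of its children $w$ has begun broadcasting (condition~(i)), so by the echo induction each such $w$ eventually becomes echo ready, the echoes propagate back down the DAG, and $v$ does too, necessarily at a round no earlier than $t^{b}_{1}(v)$. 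Combining all four strict inequalities $t^{b}_{0}(v) < t^{b}_{1}(v) \leq t^{e}_{0}(v) < t^{e}_{1}(v)$ completes property~(1) and property~(2).

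Property~(3) then falls out quantitatively: each step along a directed path in $\vec{G}$ contributes exactly one round to the chain of $t^{b}_{0}$ values (and symmetrically to $t^{b}_{1}$, $t^{e}_{0}$, $t^{e}_{1}$), and the longest source-to-sink path in $\vec{G}$ has length at most $D$. Hence, starting from the latest root initiation time $t$, each of the four phases completes within $O(D)$ rounds, giving a total run-time of $O(D)$ as claimed. The main obstacle I anticipate is part~(b) of the broadcast invariant: a priori a parent's broadcast window could be arbitrarily short, which would prevent multiple parents' intervals from overlapping simultaneously at $v$. The fix is the joint invariant above, which couples the upper bound on $t^{b}_{0}$ to a lower bound on $t^{b}_{1}$, propagated together down the DAG; getting the direction of this coupling right, and mirroring it for the echo phase, is where the bookkeeping must be done carefully.
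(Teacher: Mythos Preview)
Your proposal is correct and is essentially a detailed unpacking of the paper's own proof, which consists of the single sentence ``Follows since $\vec{G}$ is a DAG and all paths in $\vec{G}$ are shortest paths.'' The topological induction on $\vec{G}$ together with your coupling invariant $t^{b}_{0}(u) < t^{b}_{0}(v) \leq t^{b}_{1}(u)$ (forced by condition~(i) in the definition of $t^{b}_{1}$) are exactly the mechanics that sentence is gesturing at, and the $O(D)$ bound then follows from the depth bound on $\vec{G}$ as you describe.
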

\begin{proof}
Follows since $\vec{G}$ is a DAG and all paths in $\vec{G}$ are shortest
paths.
\end{proof}

\paragraph{Auxiliary Conditions.}
In the aforementioned implementation of the broadcast (resp., echo) subroutine,
being broadcast (resp., echo) ready is both a necessary and sufficient
condition for a node to start sending $\Broadcast$ (resp.,
$\Echo$) messages.
In \Sect{}~\ref{sec:main-algorithm}, we describe variants of this subroutine
in which being broadcast (resp., echo) ready is a necessary, but not
necessarily sufficient, condition and the node starts sending
$\Broadcast$ (resp., $\Echo$) messages only after additional
conditions, referred to later on as \emph{auxiliary conditions}, are
satisfied.

\paragraph{Acknowledged Ball Growing.}
As presented in \Sect{}~\ref{sec:ball-growing}, the ball growing subroutine
propagates from the roots to the leaves.
To ensure that root $r$ is signaled when the construction of its ball
$B(r)$ has finished (cf.\ termination detection), $r$ initiates a B\&E process
one round after it invokes the ball growing subroutine.
The valid operation of this process is guaranteed since the ball growing
process propagates at least as fast as the B\&E process.
We call the combined subroutine \emph{acknowledged ball growing}.

\subsection{The Main Algorithm}
\label{sec:main-algorithm}
Our $k$-leader selection algorithm consists of two \emph{phases} executed
repeatedly in alternation:
\begin{DenseItemize}

\item
phase $0$, a.k.a.\ the \emph{detection} phase, that detects the
existence of multiple candidates whp;
and

\item
phase $1$, a.k.a.\ the \emph{elimination} phase, in which all candidates but
one withdraw with probability at least
$1 / 4$.

\end{DenseItemize}
Starting with a detection phase, the algorithm executes the phases in
alternation until the first detection phase that does not detect candidate
multiplicity.
Each node $v$ maintains a \emph{phase} variable
$\phi(v) \in \{ 0, 1 \}$
that indicates $v$'s current phase.

The two phases follow a similar structure:
The (surviving) candidates start by initiating an acknowledged ball growing
process.
Among its other ``duties'', this ball growing process is responsible for
updating the phase variables $\phi(\cdot)$ of the nodes:
node $v$ with
$\phi(v) = p$
that receives a $\GrowBall(\cdot)$ message from node $u$ with
$\phi(u) = p + 1 \bmod 2$
assigns
$\phi(v) \leftarrow p + 1 \bmod 2$.
When updating the phase variable $\phi(v)$ to
$\phi(v) = p + 1 \bmod 2$,
node $v$ ceases to participate in phase $p$, resetting all phase $p$
variables.
Recalling the definition of the ball growing subroutine (see
\Sect{}~\ref{sec:ball-growing}), this means in particular that if a candidate
$r$ with
$\phi(r) = p$
receives a $\GrowBall(\cdot)$ message from node $u$ with
$\phi(u) = p + 1 \bmod 2$,
then $r$ withdraws and subsequently follows the protocol like any other
non-root node.

\begin{IntuitionSpotlight}
The ball growing process of phase
$p + 1 \bmod 2$
essentially ``takes control'' over the graph and ``forcibly'' terminates
phase $p$ (at nodes where it did not terminate already).
We design the algorithm to ensure that at any point in time, there is at most
one $p$ value for which there is an ongoing ball growing process in the graph
(otherwise, we may get to undesired situations such as all candidates
withdrawing).
\end{IntuitionSpotlight}

Upon termination of the acknowledged ball growing process, the roots run $2 k$
back-to-back \emph{B\&E iterations}, initiating the broadcast process of the
next B\&E iteration one round after the echo process of the previous B\&E
iteration terminates
(the choice of the parameter
$2 k$
will become clear soon).
Each node $v$ maintains a variable
$\iota(v) \in \{ 0, 1, \dots, 2 k \}$
that stores $v$'s current B\&E iteration.
This variable is initialized to
$\iota(v) \leftarrow 0$
during the acknowledged ball growing process (considered hereafter as B\&E
iteration $0$) and incremented subsequently from
$i - 1$
to $i$ when $v$ becomes broadcast ready in B\&E iteration $i$ (see
\Sect{}~\ref{sec:broadcast-echo}).
A phase ends when the echo process of B\&E iteration
$2 k$
terminates.

The $\iota(\cdot)$ variables may differ across the graph and to keep the
B\&E iterations in synchrony, we augment the B\&E subroutines with the
following auxiliary conditions (see \Sect{}~\ref{sec:broadcast-echo}):
Node $v$ with
$\iota(v) = i$
(i.e., in B\&E iteration $i$) does not start to send $\Broadcast$ (resp.,
$\Echo$) messages as long as it has a non-child (resp., non-parent) neighbor
$u$ with
$\iota(u) = i - 1$.%
\footnote{This can be viewed as imposing the $\alpha$-synchronizer of
\cite{Awerbuch1985} on the B\&E iterations of the balls.}
We emphasize that this includes neighbors $u$ that are neither children nor
parents of $v$.

For the sake of the next observation, we globally map the B\&E iterations to
\emph{sequence numbers} so that B\&E iterations
$0, 1, \dots, 2 k$
of the first phase (which is a detection phase) are mapped to sequence
numbers
$1, 2, \dots, 2 k + 1$,
respectively,
B\&E iterations
$0, 1, \dots, 2 k$
of the second phase (which is an elimination phase) are mapped to sequence
numbers
$2 k + 2, 2 k + 3, \dots, 4 k + 2$,
respectively,
and so on.
Let $\sigma(v)$ be a variable (defined only for the sake of the analysis)
indicating the sequence number of node $v$'s current B\&E iteration.

\begin{observation} \label{obs:iteration-synchrony-roots}
For every two roots $r$ and $r'$, we have
$|\sigma(r) - \sigma(r')| \leq k - 1$.
\end{observation}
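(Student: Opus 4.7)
The plan is to leverage the synchronization enforced by the auxiliary B\&E conditions together with the fact that at most $k$ balls cover the graph.

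First, I would establish a local synchronization lemma: for any two $G$-neighbors $u,v$, the inequality $|\sigma(u)-\sigma(v)|\leq 1$ holds at every point in time. This goes by induction on the rounds, handling three regimes. (i) Inside a single phase, the auxiliary condition --- that a node in iteration $i$ refuses to send $\Broadcast$ (resp.~$\Echo$) while any neighbor still reads $\iota=i-1$ --- immediately forces neighbors' $\iota$ values, and hence their $\sigma$ values, to differ by at most $1$. (ii) During a phase transition, the new phase's $\GrowBall$ messages propagate exactly one hop per round, so on either side of the advancing ``wavefront'' the $\sigma$ values of adjacent nodes differ by at most $1$. (iii) Because the algorithm is engineered to keep at most one phase's ball growing active at a time (the invariant highlighted in the intuition spotlight), case (ii) is the only cross-phase interaction to worry about.

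Next, I would define a \emph{ball-adjacency graph} $H$ whose vertices are the current roots, with $r_i\sim r_j$ whenever $B(r_i)\cap B(r_j)\neq\emptyset$ or there are $v_i\in B(r_i)$, $v_j\in B(r_j)$ with $\{v_i,v_j\}\in E$. Lemma~\ref{lem:ball-growing}(3) says the balls cover $V$, and $G$ is connected, so $H$ is connected; since there are at most $k$ roots, $H$ has at most $k$ vertices and therefore diameter at most $k-1$.

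The crux is then to show that adjacent roots in $H$ satisfy $|\sigma(r_i)-\sigma(r_j)|\leq 1$. For the shared-node case $v\in B(r_i)\cap B(r_j)$, one uses that $v$ has parents in $\vec{G}$ lying on incrementing paths from both roots, and by the broadcast-ready rule $v$ cannot advance until both incoming waves reach it --- coupling the progress of the two balls. For the case of adjacent $v_i\in B(r_i)$, $v_j\in B(r_j)$, the same coupling is inherited through the neighbor bound of step~1, since then $v_i$ is either in both balls (reducing to the previous case) or is a boundary node that sees $v_j$'s iteration. Either way the two roots' sequence numbers stay within $1$ of each other.

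Combining these ingredients, any two roots $r,r'$ are joined in $H$ by a path of length at most $k-1$, along which each edge contributes at most a unit difference in $\sigma$, yielding $|\sigma(r)-\sigma(r')|\leq k-1$. I expect the genuinely delicate step to be the ``adjacent roots differ by at most $1$'' claim, since it requires carefully disentangling the interaction at a shared (or bordering) boundary node --- in particular verifying that the bound is not weakened to, say, $2$ by a round of slack between the ball-growing wave and the B\&E wave --- and that this is where the $2k$ back-to-back iterations per phase and the auxiliary conditions are simultaneously used.
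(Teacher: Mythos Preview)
The paper states this as an \emph{observation} without proof; the footnote comparing the auxiliary conditions to Awerbuch's $\alpha$-synchronizer indicates that the authors view it as a direct consequence of that mechanism. Your high-level plan --- build the ball-adjacency graph $H$ on the at most $k$ roots, note that it is connected (balls cover $V$, $G$ is connected) and hence has diameter at most $k-1$, show that $H$-adjacent roots differ in $\sigma$ by at most $1$, and chain along a path in $H$ --- is exactly the argument one expects, and the chaining step is sound.

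The weak point is Step~1, specifically regime~(ii). The claim that \emph{arbitrary} $G$-neighbours satisfy $|\sigma(u)-\sigma(v)|\le 1$ does not hold across a phase boundary: a node $v$ freshly swept into phase $p{+}1$ has its iteration counter reset to~$0$ (sequence number $s{+}1$), while a neighbour $u$ still in phase $p$ may sit at any iteration $j\le 2k$ (sequence number $s-2k+j$), giving a gap of $2k-j+1$. Your argument that the wavefront advances one hop per round tells you \emph{when} $u$ will itself be swept, not what $\sigma(u)$ is in the interim. Since you route the adjacent-boundary-node case of Step~4 through Step~1, this gap propagates.

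The repair is to argue the ``$H$-adjacent roots differ by at most~$1$'' claim directly, bypassing the general neighbour bound. For roots $r,r'$ whose balls touch: $r$ cannot advance $\sigma(r)$ past the value corresponding to iteration~$i$ until its iteration-$i$ echo returns; that echo requires every node in $B(r)$, in particular each boundary node $v$, to send $\Echo$; the auxiliary condition blocks $v$ while a non-parent neighbour $v'\in B(r')$ has $\iota(v')=i-1$; and the within-phase part of your Step~1 (which \emph{is} valid) together with monotonicity of $\iota$ then forces $\iota(v')\ge i$, which cascades up $\vec G$ inside $B(r')$ to give $\iota(r')\ge i$. This yields $\sigma(r)-\sigma(r')\le 1$ with no cross-phase neighbour comparison needed. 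The phase-boundary case then takes care of itself: a root entering iteration~$0$ of phase $p{+}1$ has incremented $\sigma$ by exactly one over its last phase-$p$ value, and chaining along the phase-$p$ graph $H$ still delivers the $k-1$ bound at that instant.
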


We say that round $t$ is \emph{$0$-dirty} (resp., \emph{$1$-dirty}) if some
node $v$ with
$\phi(v) = 0$
(resp., $\phi(v) = 1$)
sends a $\GrowBall(\cdot)$ message in round $t$;
the round is said to be \emph{clean} if it is neither $0$-dirty nor $1$-dirty.
\Obs{}~\ref{obs:iteration-synchrony-roots} implies that if
$\phi(r) = p$
and
$\iota(r) = k$
for some root $r$ in round $t$,
then
$\phi(r') = p$
and
$1 \leq \iota(r') \leq 2 k - 1$
for any other root $r'$ in round $t$, hence the ball growing process of this
phase has already ended and the ball growing process of the next phase has
not yet started.

\begin{corollary} \label{cor:clean-round}
Let $t_{0}$ and $t_{1}$ be some $0$-dirty and $1$-dirty rounds, respectively.
If
$t_{0} \leq t_{1}$
(resp.,
$t_{1} \leq t_{0}$),
then there exists some
$t_{0} < t' < t_{1}$
(resp.,
$t_{1} < t' < t_{0}$)
such that round $t'$ is clean.
\end{corollary}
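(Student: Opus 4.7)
The plan is to leverage Observation~\ref{obs:iteration-synchrony-roots} together with the explanatory paragraph that follows it, which together establish a \emph{clean interlude} in the middle of every phase. Concretely, I would first record the key fact: in any round $t$ in which some surviving root $r$ satisfies $\iota(r)=k$, the round is clean. This is justified because, by Obs.~\ref{obs:iteration-synchrony-roots}, every other surviving root $r'$ has $\phi(r')=\phi(r)$ and $\iota(r')\in\{1,\ldots,2k-1\}$; iteration $0$ is the acknowledged ball-growing step, so no root is currently growing a ball (no $\GrowBall$ of the next phase), and every root's current ball growing has already completed (no $\GrowBall$ of the current phase still in flight).

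Assume without loss of generality that $t_0\le t_1$; the symmetric case is identical up to swapping the roles of phases. The next step is to translate the dirtiness assumptions into statements about sequence numbers. If a node sends $\GrowBall$ with $\phi=p$ in round $t$, then there is a root $r$ that is still executing iteration $0$ of some phase-$p$ occurrence at round $t$: the root's $\iota$-variable only leaves $0$ after iteration~$0$'s B\&E concludes, which, since broadcast never outpaces the ball growing and the echo must still return, can only happen once the $\GrowBall$ propagation across its ball is finished. Hence there exist roots $r_0,r_1$ (not necessarily distinct, possibly identical to the triggering nodes if the latter happen to be roots) with $\sigma(r_0)=S_0$ in round $t_0$ and $\sigma(r_1)=S_1$ in round $t_1$, where $S_0$ is the sequence number of some detection-phase iteration~$0$ and $S_1$ is the sequence number of some elimination-phase iteration~$0$.

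The arithmetic step is to argue that $S_1-S_0\ge 2k+1$. Pick any root $r$ that survives up to round $t_1$; since candidates only withdraw (never spawn), $r$ was also a root at $t_0$. Obs.~\ref{obs:iteration-synchrony-roots} gives $\sigma(r)\in[S_0-k+1,S_0+k-1]$ at round $t_0$ and $\sigma(r)\in[S_1-k+1,S_1+k-1]$ at round $t_1$. Since $\sigma(r)$ is nondecreasing in time and the differences $S_1-S_0$ that are compatible with the stated congruences (detection iteration~$0$ vs.\ elimination iteration~$0$) form an arithmetic progression with common difference $2(2k+1)$, the inequality $S_1-S_0\ge -k+1$ that monotonicity forces already rules out every negative value in that progression, leaving $S_1-S_0\ge 2k+1$.

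Now combine the two range bounds: at round $t_0$ every root has $\sigma\le S_0+k-1$, whereas at round $t_1$ every root has $\sigma\ge S_1-k+1\ge S_0+k+2$. Fix the root $r$ from the previous paragraph; its $\sigma(r)$ is an integer-valued nondecreasing function of the round that jumps by one at a time, so there must exist a round $t^*$ with $t_0<t^*<t_1$ at which $\sigma(r)=S_0+k$, i.e., $\iota(r)=k$ during a phase with $\phi(r)=0$. By the key fact recorded in the first paragraph, round $t^*$ is clean, completing the proof. I expect the only subtle obstacle is a careful justification of step~(i) — that a $\GrowBall$ message with $\phi=p$ in round $t$ witnesses a root in iteration $0$ of phase $p$ at round $t$ — since it relies on tracing the acknowledged ball-growing timing discussed in Section~\ref{sec:broadcast-echo}; everything else is a bookkeeping argument using monotonicity and Obs.~\ref{obs:iteration-synchrony-roots}.
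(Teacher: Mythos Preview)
Your approach is essentially the one the paper intends: the paper does not give a separate proof of the corollary but treats it as immediate from Obs.~\ref{obs:iteration-synchrony-roots} together with the paragraph preceding the corollary, which asserts exactly your ``key fact'' (a round in which some root has $\iota=k$ is clean because all roots then satisfy $1\le\iota\le 2k-1$). Your write-up is a fleshed-out version of that same idea --- you make explicit the sequence-number bookkeeping and the intermediate-value step that the paper leaves implicit --- and you correctly flag that the only nontrivial ingredient is the contrapositive of the paper's unproved assertion ``the ball growing process of this phase has already ended,'' namely your step~(i) that a $\GrowBall$ in flight forces some root to still be in iteration~$0$.
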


\subsubsection{The Detection Phase}
\label{sec:detection-phase}
In the detection phase, the nodes test for candidate multiplicity in the
graph.
If the graph contains a single candidate $r$, then the algorithm terminates
upon completion of this phase and $r$ is declared to be the leader.
Otherwise, certain boundary nodes (see \Sect{}~\ref{sec:ball-growing}) realize
whp that multiple balls exist in their neighborhoods and signal the roots that
they should proceed to the elimination phase (rather than terminate the
algorithm) upon completion of the current detection phase.
This signal is carried by $\Proceed$ messages delivered from the boundary
nodes to the roots of their balls down the incrementing paths in conjunction
with the $\Echo$ messages of the (subsequent) B\&E iterations.

For the actual candidate multiplicity test, once all nodes in the (inclusive)
neighborhood of node $v$ participate in the detection
phase, node $v$ checks if it is a locally observable boundary node and
triggers a $\Proceed$ message delivery if it is.
As the name implies, this check can be performed (locally) under the
\SAloops{} model assuming that the messages sent by the nodes encode their
local states, including the level variables.

\begin{IntuitionSpotlight}
Although every locally observable boundary node is a boundary node, not all
boundary nodes are locally observable:
a node may belong to several different balls or two adjacent nodes with the
same level variable may belong to different balls.
For this kind of scenarios, randomness is utilized to break symmetry between
the candidates and identify (some of) the boundary nodes.
\end{IntuitionSpotlight}

Consider some root $r$ with
$\phi(r) = 0$
upon termination of the acknowledged ball growing subroutine and recall that
at this stage, $r$ runs
$2 k$
back-to-back B\&E iterations.
In each round of these
$2 k$
B\&E iterations, $r$ picks some \emph{symbol} $s$ uniformly at random (and
independently of all other random choices) from a sufficiently large (yet
constant size) symbol space $\mathcal{S}$ and sends a $\RandSymbol(s)$
message.
This can be viewed as a random symbol \emph{stream}
$S_{r} \in \mathcal{S}^{*}$
that $r$ generates, round by round, and sends to its children.

The random symbol streams $S_{r}$ are disseminated throughout $B(r)$ and
utilized by the nodes (the boundary nodes in particular) to test for candidate
multiplicity.
For clarity of the exposition, it is convenient to think of a node $v$ that
does not send a $\RandSymbol(s)$ message,
$s \in \mathcal{S}$,
as if it sends a $\RandSymbol(\bot)$ message for the default symbol
$\bot \notin \mathcal{S}$.
The mechanism in charge of disseminating $S_{r}$ up the incrementing paths
works as follows:
If non-root node $v$ with
$\phi(v) = 0$
receives $\RandSymbol(s)$ messages with the same argument $s$ from all its
parents at the beginning of round $t$, then $v$ sends a $\RandSymbol(s)$
message at the end of round $t$;
in all other cases, $v$ sends a $\RandSymbol(\bot)$ message.

Throughout this process, each node $v$ verifies that \\
(1)
all $\RandSymbol(s)$ messages sent by $v$'s parents in round $t$ carry the
same argument $s$;
and \\
(2)
any $\RandSymbol(s)$ message sent by a neighbor $u$ of $v$ with
$\lambda(u) = \lambda(v)$
in round $t$ carries the same argument $s$ as in the $\RandSymbol(s)$ message
that $v$ sends in round $t$ (this is checked by $v$ in round
$t + 1$). \\
If any of these two conditions does not hold, then $v$ triggers a $\Proceed$
message delivery.
A root that completes all
$2 k$
B\&E iterations in the detection phase without receiving any $\Proceed$
message terminates the algorithm and declares itself as the leader.

\begin{IntuitionSpotlight}
Since the aforementioned random tests should detect candidate multiplicity whp
(i.e., with error probability inverse polynomial in $n$) and since the size
of the symbol space $\mathcal{S}$ from which the random symbol streams $S_{r}$
are generated is bounded, it follows that the length of the random symbol
streams must be
$|S_{r}| \geq \Omega (\log n)$.
How can we ensure that
$|S_{r}| \geq \Omega (\log n)$
if the nodes cannot count beyond some constant?
\end{IntuitionSpotlight}

To ensure that the random symbol stream $S_{r}$ is sufficiently long, we
augment the echo subroutine invoked during B\&E iteration $k$ of the detection
phase (out of the
$2 k$
B\&E iterations in this phase) with one additional auxiliary condition
referred to as the \emph{geometric auxiliary condition}:
Consider some node $v$ with
$\phi(v) = 0$
and
$\iota(v) = k$
(i.e., in the $k$-th B\&E iteration of the detection phase)
and suppose that it becomes echo ready (for B\&E iteration $k$) in round
$t_{0}$.
Then, $v$ tosses a fair coin
$c(t) \in_{r} \{ 0, 1 \}$
in each round
$t \geq t_{0}$
until the first round $t'$ for which
$c(t') = 1$;
node $v$ does not send $\Echo$ messages until round $t'$.
This completes the description of the detection phase.

\begin{lemma} \label{lem:detection-phase-safety}
If multiple roots start a detection phase, then all of them receive a
$\Proceed$ message before completing their (respective)
$2 k$
B\&E iterations whp.
\end{lemma}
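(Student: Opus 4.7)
My plan is to exhibit, for every root $r$ entering the detection phase, some boundary node in $B(r)$ (guaranteed by \Obs{}~\ref{obs:boundary-nodes}) that triggers a $\Proceed$ message whp; once triggered, the $\Proceed$ rides on the $\Echo$ messages of subsequent B\&E iterations and reaches $r$ in time, since by \Lem{}~\ref{lem:broadcast-echo} a single echo traverses the incrementing paths of $\vec{G}$ in $O(D)$ rounds, and the $k - 1$ iteration slack permitted by \Obs{}~\ref{obs:iteration-synchrony-roots} fits comfortably inside the $2k$-iteration budget. The real work is in establishing the trigger, which I would organize as a case analysis on the boundary node.

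If the chosen boundary node $v$ is locally observable then the deterministic level-gap check at $v$ fires as soon as $v$ and all its neighbors are in the detection phase, which is no later than iteration~$1$. Otherwise every neighbor of $v$ has level in $\{\lambda(v) - 1, \lambda(v), \lambda(v) + 1\} \bmod M$, and \Lem{}~\ref{lem:ball-growing} together with the boundary definition forces either (i)~$v \in B(r) \cap B(r')$ with $r' \neq r$, so $v$ has parents carrying $\RandSymbol$ messages from the independent streams $S_r$ and $S_{r'}$ (violating verification~(1) whenever the two symbols disagree); or (ii)~$v$ has a same-level neighbor $v' \in B(r') \setminus B(r)$ (violating verification~(2) under the same disagreement event). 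Since the streams $S_r$ and $S_{r'}$ are generated from fresh independent uniform draws over $\mathcal{S}$ in every round, the per-round failure probability is at most $1/|\mathcal{S}|$, and the total failure probability over $L$ comparison rounds is at most $|\mathcal{S}|^{-L}$.

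The main obstacle is to show $L = \Omega(\log n)$ whp. The lever is the geometric auxiliary condition of iteration~$k$: unrolling the echo-ready recurrence $t^e_0(u) = \max_{c}\bigl(t^e_0(c) + G_c\bigr) + 1$ along any path ending at $w$ yields $t^e_0(r) \geq G_w$ for every $w \in B(r)$, so $t^e_0(r) \geq \max_{w \in B(r)} G_w$. Since the balls cover the $n$ nodes with at most $k$ sets, the largest ball $B(r^{\star})$ has size $\Omega(n / k) = \Omega(n)$, and the maximum of that many i.i.d.\ $\mathrm{Geom}(1/2)$ variables is $\Omega(\log n)$ whp; hence iteration~$k$ at $r^{\star}$ spans $\Omega(\log n)$ rounds. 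The $\alpha$-synchronizer-style auxiliary condition on iteration advancement, together with \Obs{}~\ref{obs:iteration-synchrony-roots}, then forbids any other root from drifting more than $k - 1$ iterations away from $r^{\star}$, so every root keeps emitting $\RandSymbol$ messages throughout this long window; the depth-induced shifts between the two streams at $v$ preserve the per-round independence of the comparisons. A union bound over the $O(k) = O(1)$ selected boundary nodes, combined with a symbol space $\mathcal{S}$ large enough to drive $|\mathcal{S}|^{-\Omega(\log n)}$ below $n^{-c}$ for the desired constant $c$, closes the argument. The subtle steps I expect to grind through are the DAG-level echo bound for the largest ball and the cross-ball synchronization that propagates this long iteration to every other root; once those are in place, the rest is routine case analysis and counting.
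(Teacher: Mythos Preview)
Your overall plan matches the paper's proof: use the geometric auxiliary condition to force $\Omega(\log n)$ rounds of symbol emission, propagate this to all roots via \Obs{}~\ref{obs:iteration-synchrony-roots}, and then argue that some boundary node in each ball detects a mismatch whp. The paper simply takes the maximum of all $n$ geometric variables rather than first restricting to the largest ball, but your pigeonhole step works just as well since $k = O(1)$.

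There is, however, a real gap in your case analysis. You assert that when $v \in B(r) \cap B(r')$, ``$v$ has parents carrying $\RandSymbol$ messages from the independent streams $S_r$ and $S_{r'}$,'' and similarly in case~(ii) that $v$ posts $S_r$ while $v'$ posts $S_{r'}$. Neither claim is justified for an \emph{arbitrary} boundary node: a parent $u$ of $v$ on the incrementing path from $r$ may itself be a boundary node (lying in several balls), in which case $u$ could be posting $\bot$, or a symbol that depends on several roots' coins, and the clean per-round independence you invoke dissolves. The paper closes this by choosing $v$ to be the boundary node in $B(r)$ at \emph{minimum distance} from $r$; minimality forces every internal node of the incrementing $(r,v)$-path to be a non-boundary node of $B(r)$, hence to relay $Z_r$ verbatim, so the predecessor $u$ of $v$ genuinely posts $Z_r$. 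Any other parent $u'$ of $v$ then lies outside $B(r)$ (otherwise $u'$ would be a strictly closer boundary node), so what $u'$ posts is independent of $r$'s coins, and the $|\mathcal{S}|^{-L}$ bound applies. The same minimality argument handles your case~(ii). Without this choice, the argument as written does not close; adding it is exactly the missing idea.
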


\begin{IntuitionSpotlight}
The proof's outline is as follows.
We use the geometric auxiliary conditions to argue that there exists some root
that spends
$\Omega (\log n)$
rounds in B\&E iteration $k$ whp.
Employing \Obs{}~\ref{obs:iteration-synchrony-roots}, we conclude that
the random symbol stream generated by every root $r$ is
$\Omega (\log n)$-long
whp.
Conditioned on that, we prove that there exists some boundary node
$v \in B(r)$
that triggers a $\Proceed$ message delivery whp and that the corresponding
$\Proceed$ message is delivered to $r$ before the phase ends.
\end{IntuitionSpotlight}

\begin{proof}[Proof of \Lem{}~\ref{lem:detection-phase-safety}]
Fix some detection phase.
For a root $r$, let $c_{r}$ be the number of rounds $r$ spends in B\&E
iterations
$1, 2, \dots, 2 k - 1$,
that is, the number of rounds in which
$1 \leq \iota(r) \leq 2 k - 1$
(during this detection phase).
We first argue that
$c_{r} \geq \Omega (\log n)$
for all roots $r$ whp.
To that end, let $X_{v}$ be the number of rounds in which node $v$ is
prevented from sending its $\Echo$ messages in B\&E iteration $k$ due to the
geometric auxiliary condition
($t' - t_{0}$
in the aforementioned notation of the geometric auxiliary condition) and
notice that this auxiliary condition is designed so that $X_{v}$ is a
geometric random variable with parameter
$1 / 2$.
Therefore,
\[
\Pr \left( \bigwedge_{v \in V} X_{v} < \log (n) / 2 \right)
\, = \,
\left( 1 - 2^{-\log (n) / 2} \right)^{n}
\, = \,
\left( 1 - 1 / \sqrt{n} \right)^{n}
\, \leq \,
e^{-\sqrt{n}} \, .
\]

Condition hereafter on the event that
$X_{v^{*}} \geq \log (n) / 2$
for some node $v^{*}$, namely, $v^{*}$ is prevented from sending its $\Echo$
messages (in B\&E iteration $k$) for at least
$\log (n) / 2 = \Omega (\log n)$
rounds.
Let $r^{*}$ be a root such that
$v^{*} \in B(r^{*})$.
By the definition of auxiliary conditions, B\&E iteration $k$ of $r^{*}$ takes
at least
$\Omega (\log n)$
rounds.
\Obs{}~\ref{obs:iteration-synchrony-roots} guarantees that by the time
$r^{*}$ starts B\&E iteration $k$, every other root must have already started
B\&E iteration $1$ (of this detection phase).
Moreover, no root can start B\&E iteration
$2 k$
before $r^{*}$ finishes B\&E iteration $k$.
We conclude that every root $r$ spends at least
$\Omega (\log n)$
rounds in B\&E iterations
$1, 2, \dots, 2 k - 1$,
thus establishing the argument.

Let $Z_{r}$ be the prefix of the random symbol stream $S_{r}$ generated by
root $r$ during the first
$c_{r} - 1$
rounds it spends in B\&E iterations
$1, 2, \dots, 2 k - 1$,
i.e., during all but the last round of these B\&E iterations (the reason for
this missing round is explained soon), and let
$z_{r} = |Z_{r}|$.
We have just showed that
$z_{r} = c_{r} - 1 \geq \Omega (\log n)$
for all roots $r$ whp.

The assertion is established by proving that if multiple roots $r$ exist in
the graph and
$z_{r} \geq \Omega (\log n)$
for all of them, then for every root $r$, there exists some node
$v \in B(r)$
that triggers a $\Proceed$ message delivery while
$\iota(v) \leq 2 k - 1$
whp.
Indeed, if the $\Proceed$ message delivery is triggered by $v$ while
$\iota(v) \leq 2 k - 1$,
then a $\Proceed$ message is delivered to $r$ with the $\Echo$ messages of
B\&E iteration
$2 k$
at the latest, thus $r$ does not terminate the algorithm at the end of this
detection phase and by the union bound, this holds simultaneously for all
roots $r$ whp.

To that end, recall that node $v$ sends a $\RandSymbol(s)$ message with some
symbol
$s \in \mathcal{S} \cup \{ \bot \}$
in every round of the detection phase.
In the scope of this proof, we say that $v$ \emph{posts} the symbol stream
$(s_{1}, \dots, s_{z})$
in rounds
$t_{1}, \dots, t_{z}$
if $s_{j}$ is the argument of the $\RandSymbol(\cdot)$ message sent by $v$ in
round $t_{j}$ for every
$1 \leq j \leq z$.

Consider some root $r$ and let $v$ be a boundary node in $B(r)$ that minimizes
the distance to $r$.
If $v$ is locally observable, then it triggers a $\Proceed$ message delivery
(deterministically) already when
$\iota(v) = 0$,
so assume hereafter that $v$ is not locally observable.
Let $Q$ be an incrementing
$(r, v)$-path
and denote the length of $Q$ by $q$.
Taking $\hat{t}$ to be the round in which B\&E iteration $1$ of $r$ begins,
recall that $r$ posts $Z_{r}$ in rounds
$\hat{t}, \hat{t} + 1, \dots, \hat{t} + z_{r} - 1$.
The choice of $v$ ensures that all nodes of $Q$ other than $v$ are not
boundary nodes, therefore if
$q \geq 1$
(i.e., if
$v \neq r$),
then the node that precede $v$ along $Q$ --- denote it by $u$ --- posts
$Z_{r}$ in rounds
$\hat{t} + q - 1, \hat{t} + q, \dots, \hat{t} + q + z_{r} - 2$.
Moreover, by the definition of $Z_{r}$, specifically, by the choice of
$z_{r} = c_{r} - 1$,
we know that
$0 \leq \iota(v) \leq 2 k - 1$
(and
$\phi(v) = 0$)
in all rounds
$\hat{t} \leq t \leq \hat{t} + q + z_{r}$.

If $v$ belongs to multiple balls, which necessarily means that
$v \neq r$
and
$q \geq 1$
(see \Lem{}~\ref{lem:ball-growing}),
then $v$ has another parent
$u' \neq u$
such that
$u' \in B(r')$
for some root
$r' \neq r$.
The probability that $u'$ posts $Z_{r}$ in rounds
$\hat{t} + q - 1, \hat{t} + q, \dots, \hat{t} + q + z_{r} - 2$
is at most
$|\mathcal{S}|^{-z_{r}}$.
Otherwise, if $v$ belongs only to ball $B(r)$, then all its parents post
$Z_{r}$ in rounds
$\hat{t} + q - 1, \hat{t} + q, \dots, \hat{t} + q + z_{r} - 2$
(this holds vacuously if
$q = 0$
and
$v = r$
has no parents),
thus $v$ posts $Z_{r}$ in rounds
$\hat{t} + q, \hat{t} + q + 1, \dots, \hat{t} + q + z_{r} - 1$.
Since $v$ is a non-locally observable boundary node (that belongs exclusively
to ball $B(r)$), it must have a neighbor $v'$ with
$\lambda(v') = \lambda(v)$
such that
$v' \notin B(r)$.
The probability that $v'$ posts $Z_{r}$ in rounds
$\hat{t} + q, \hat{t} + q + 1, \dots, \hat{t} + q + z_{r} - 1$
is at most
$|\mathcal{S}|^{-z_{r}}$
as well.
Therefore, the probability that $v$ does not trigger a $\Proceed$ message
delivery while
$\iota(v) \leq 2k - 1$
is upper-bounded by
$|\mathcal{S}|^{-z_{r}}$
which completes the proof since
$z_{r} \geq \Omega (\log n)$
and since $|\mathcal{S}|$ is an arbitrarily large constant.
\end{proof}

\subsubsection{The Elimination Phase}
\label{sec:elimination-phase}
In the elimination phase, each candidate $r$ picks a \emph{priority} $\pi(r)$
uniformly at random (and independently) from a totally ordered priority space
$\mathcal{P}$;
a candidate whose priority is (strictly) smaller than
$\pi_{\max} = \max_{r} \pi(r)$
is withdrawn.
Taking the priority space to be
$\mathcal{P} = \{ 1, \dots, k\}$,
it follows by standard balls-in-bins arguments that the probability that
exactly one candidate picks priority $k$, which implies that exactly one
candidate survives, is at least
$1 / 4$
(in fact, it tends to
$1 / 4$
as
$k \rightarrow \infty$).

\begin{IntuitionSpotlight}
The priorities of the candidates are disseminated in the graph so that
candidate $r$ withdraws if it encounters a priority
$\pi > \pi(r)$.
This is implemented on top of the ball growing subroutine invoked at
the beginning of the elimination phase so that the ball growing process of
root $r$ ``consumes'' the ball of root $r'$ if
$\pi(r) > \pi(r')$,
eventually reaching $r'$ and instructing it to withdraw.
The structure of the phase (specifically, the
$2 k$
B\&E iterations that follow the ball growing process) guarantees
that only roots $r$ with
$\pi(r) = \pi_{\max}$
reach the end of the phase (without being withdrawn).
\end{IntuitionSpotlight}

We augment the ball growing subroutine invoked at the beginning of the
elimination phase with the following mechanism:
When candidate $r$ is signaled to invoke the ball growing subroutine (so that
it becomes a root), it appends its priority $\pi(r)$ to the $\GrowBall(\cdot)$
message it sends.
A non-root node $v$ that joins the ball of $r$ records $r$'s priority in
variable
$\pi(v) \leftarrow \pi(r)$.
A (root or non-root) node $v$ that receives a $\GrowBall(\cdot)$ message with
priority (strictly) larger than $\pi(v)$, behaves as if this is the first
$\GrowBall(\cdot)$ message it receives in this phase.
In particular, $v$ resets all the variables of this phase and (re-)joins a ball
from scratch.
If $v$ is a root, then it also withdraws.

Notice that \Obs{}~\ref{obs:iteration-synchrony-roots} still holds for
the aforementioned augmented implementation of the ball growing subroutine.
Therefore, when root $r$ reaches B\&E iteration $k$, i.e.,
$\iota(r) = k$,
all other roots $r'$ are in some B\&E iteration
$1 \leq \iota(r') \leq 2 k - 1$
which means that there is no ``active'' ball growing processes in the graph,
that is, the current round is clean (of $\GrowBall(\cdot)$ messages).
Since a candidate $r$ with
$\pi(r) < \pi_{\max}$
is certain to be withdrawn by some $\GrowBall(\cdot)$ message appended with
priority
$\pi > \pi(r)$, we obtain the following observation.

\begin{observation} \label{obs:elimination-phase}
If root $r$ completes its
$2 k$
B\&E iterations in an elimination phase, then with probability
at least
$1 / 4$,
no other candidates exist in the graph.
\end{observation}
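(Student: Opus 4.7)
The plan is to identify a good event $E$ over the random priority choices such that $P(E) \geq 1/4$ and, conditional on $E$, only a single candidate (namely, $r^*$) can complete the $2k$ B\&E iterations of the elimination phase.

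First, I would take $E$ to be the event that exactly one candidate $r^*$ picks priority $k$; by the balls-in-bins computation recalled immediately above the observation, $P(E) \geq 1/4$. Conditioned on $E$, $\pi(r^*) = k$ is the unique maximum, so the augmented ball growing mechanism guarantees that $r^*$ itself never withdraws: a candidate can only be forced to withdraw by receiving a $\GrowBall(\cdot)$ message whose appended priority strictly exceeds its own, and no priority exceeding $k$ exists.

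Next, I would argue by contradiction that no other candidate $r' \neq r^*$ can still be a root at the end of its $2k$ B\&E iterations. Suppose $r'$ completes iteration $2k$ at some round $t$, at which point both $r^*$ and $r'$ are roots. Then \Obs{}~\ref{obs:iteration-synchrony-roots} gives $\sigma(r^*) \geq \sigma(r') - (k-1)$, and by the sequence-number map this places $r^*$ in some B\&E iteration $\geq k+1$ of the same elimination phase. In particular, $r^*$ has already received the echo of its acknowledged ball growing, so its $\GrowBall(\cdot)$ wave tagged with priority $k$ has propagated throughout $\vec{G}$ and reached $r'$; the priority-override rule of \Sect{}~\ref{sec:elimination-phase} then forces $r'$ to reset and withdraw, contradicting the assumption that $r'$ is still a root at round $t$. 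Hence, conditional on $E$, the only root that can complete $2k$ B\&E iterations is $r^*$ and no other candidates remain in the graph, so the observation follows from $P(E) \geq 1/4$.

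The main obstacle is the last sub-argument of the contradiction step: verifying that $r^*$'s having completed its acknowledged ball growing genuinely implies that its priority-$k$ $\GrowBall(\cdot)$ messages have overridden every competing ball and reached every node, including $r'$. This requires a careful coupling between the priority-weighted $\GrowBall(\cdot)$ propagation and $r^*$'s own $\Broadcast$--$\Echo$ wave, showing that higher-priority $\GrowBall(\cdot)$ messages are never stalled by lower-priority ones, so that the echo returning to $r^*$ truly certifies that $B(r^*)$ already spans the graph. This in turn relies on the interaction between the priority-override rule and the auxiliary conditions of \Sect{}~\ref{sec:broadcast-echo}.
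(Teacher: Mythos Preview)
Your overall framing --- condition on the event $E$ that a unique candidate $r^{*}$ draws priority $k$, note $P(E)\ge 1/4$, and then argue deterministically that no other candidate survives --- matches the paper's. The divergence is in the deterministic step, and the obstacle you flag is genuine, not a detail.

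You try to deduce that $r^{*}$'s priority-$k$ wave has reached $r'$ from the fact that $r^{*}$'s acknowledged-ball-growing echo has returned (indeed, that $\iota(r^{*})\ge k+1$). That inference does not go through: the echo only certifies that the nodes \emph{currently} in $B(r^{*})$ have been swept, and at that moment $B(r^{*})$ need not span the graph --- the priority-override consumption of competing balls may still be in progress beyond the current boundary of $B(r^{*})$. So ``echo returned to $r^{*}$'' does not by itself yield ``$\GrowBall(k)$ has reached $r'$,'' and the coupling you sketch between $r^{*}$'s $\GrowBall$ wave and its own B\&E wave would have to be built essentially from scratch.

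The paper sidesteps this by arguing globally rather than from $r^{*}$'s side. It applies \Obs{}~\ref{obs:iteration-synchrony-roots} at the round in which \emph{some} root first reaches $\iota=k$: at that round every surviving root has $\iota\in[1,2k-1]$, hence no node anywhere is still in the ball-growing stage and the round is clean of $\GrowBall(\cdot)$ messages (cf.\ the discussion preceding \Cor{}~\ref{cor:clean-round}). Once a round is clean, no further $\GrowBall(\cdot)$ messages are sent in this phase. Combined with the fact that any candidate with $\pi<\pi_{\max}$ is eventually withdrawn by a higher-priority $\GrowBall(\cdot)$, this forces every such candidate to have been withdrawn \emph{before} the clean round --- i.e., before any root (including $r'$ itself) reaches iteration $k$. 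In particular, no root with $\pi(r')<\pi_{\max}$ can reach iteration $k$, let alone $2k$.

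So the fix is not to tighten the local coupling you describe, but to replace it: the absence of in-flight $\GrowBall(\cdot)$ messages is the global certificate that priority-override propagation has terminated, and \Obs{}~\ref{obs:iteration-synchrony-roots} pins that termination strictly before iteration $k$.
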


\subsection{Run-Time}
\label{sec:run-time}
The correctness of our algorithm follows from
\Lem{}~\ref{lem:detection-phase-safety} and
\Obs{}~\ref{obs:elimination-phase}.
To establish \Thm{}~\ref{thm:leader-selection-positive}, it remains to
analyze the algorithm's run-time.

The first thing to notice in this regard is that the geometric auxiliary
condition does not slow down the $k$-th iteration of the detection phase by
more than an
$O (\log n)$
factor whp.
Combining \Obs{}~\ref{obs:ball-growing-run-time} with
\Lem{}~\ref{lem:broadcast-echo}, we can prove by induction on the phases that
the $j$-th phase (for
$j \leq n^{O (1)}$)
ends by round
$O (D (k + \log n))$
whp, which is
$O (D \log n)$
assuming that $k$ is fixed.
The analysis is completed due to \Obs{}~\ref{obs:elimination-phase}
ensuring that the algorithm terminates after
$O (\log n)$
elimination phases whp.

\section{Negative Results}
\label{sec:negative-results}
We now turn to establish some negative results that demonstrate the necessity
of the assumption that
$k = O (1)$.
Our attention in this section is restricted to SA and \SAloops{} algorithms
operating under a fully synchronous scheduler on graph families
$\{ L_{n} \}_{n \geq 1}$
and
$\{ L^{\circlearrowleft}_{n} \}_{n \geq 1}$,
where $L_{n}$ is a simple path of $n$ nodes and $L^{\circlearrowleft}_{n}$ is
$L_{n}$ augmented with self-loops.

The main lemma established in this section considers the \emph{$k$-candidate
binary consensus} problem, a version of the classic binary consensus problem
\cite{FischerLP1985}.
In this problem, each node $v$ gets a binary input
$\mathrm{in}(v) \in \{ 0, 1 \}$
and returns a binary output
$\mathrm{out}(v) \in \{ 0, 1 \}$
under the following two constraints:
(1)
all nodes return the same output;
and
(2)
if the nodes return output
$b \in \{ 0, 1 \}$,
then there exists some node $v$ such that
$\mathrm{in}(v) = b$.
In addition, at most $k$ (and at least $1$) nodes are initially marked as
candidates (thus distinguished from the rest of the nodes).
We emphasize that the marked candidates do not affect the validity of the
output.
Since a $k$-leader selection algorithm clearly implies a $k$-candidate binary
consensus algorithm, Theorem~\ref{thm:leader-selection-negative} is
established by proving Lemma~\ref{lem:negative-binary-consensus}.
Note that the proof of this lemma is based on a probabilistic
indistinguishability argument, similar to those used in many distributed
computing negative results, starting with the classic result of Itai and Rodeh
\cite{ItaiR1990}.

\begin{lemma} \label{lem:negative-binary-consensus}
If the upper bound $k$ on the number of candidates may grow as a function of
$n$, then there does not exist a SA algorithm that solves the
$k$-candidate binary consensus problem on the graphs in
$\{ L_{n} \}_{n \geq 1}$
with a failure probability bounded away from $1$.
\end{lemma}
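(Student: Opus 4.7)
The plan is a probabilistic indistinguishability argument in the spirit of Itai and Rodeh. Assume for contradiction that there is an SA algorithm $\mathcal{A}$ over state space $Q$ and alphabet $\Sigma$ of constant sizes that solves $k$-candidate binary consensus on $\{L_n\}_{n\geq 1}$ with failure probability at most $1-\delta$ for some constant $\delta>0$, even when the upper bound $k=k(n)$ on the number of candidates grows unboundedly with $n$. First, I would fix a large $n$ and consider two initial configurations on $L_n$ that share the same set of $k(n)$ candidates placed at evenly spaced positions: the configuration $C_0$, in which every node has input $0$, and the configuration $C_1$, in which every node has input $1$. By the validity constraint of consensus, on $C_0$ the algorithm must output $0$ with probability $\geq \delta$, and on $C_1$ it must output $1$ with probability $\geq \delta$.

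Next, I would exploit the bounded per-round randomness of an SA automaton, together with the presence of many candidates, to set up a coupling of the two executions. The key point is that each inter-candidate segment carries only a constant amount of information per round through its constant-capacity channel, while the number of such segments grows with $k$. In the spirit of the Itai-Rodeh argument for the symmetric ring, I would use a pigeonhole / union-bound calculation to show that, with probability bounded below by some $\delta'>0$ independent of $n$, the coupled runs produce at some candidate an incoming message trace that is identical in $C_0$ and in $C_1$ throughout all rounds up to termination. Here the role played by the ring's rotational symmetry in the classical proof is taken over by the multiplicity of candidates and by the finite-state obstruction to distributing distinguishing information across a long segment: because no finite automaton can count beyond a constant (and, under the assumption on $b$, cannot even distinguish the number of ports carrying a given message beyond one), the trace reaching a candidate from a long enough neighboring segment can only depend on a constant amount of data, which under a suitable coupling coincides with positive probability in the two input patterns.

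Under this coupled event, the candidate in question sees an identical local history in $C_0$ and $C_1$ (its initial state is identical by construction because its own input happens to match in the neighborhood we align the coupling to, or more precisely because we restrict the coupling to segments whose boundary candidate's initial state is coupled identically via its random draws). Hence it must produce the same output in both coupled executions. But validity demands output $0$ on $C_0$ and output $1$ on $C_1$, each with probability at least $\delta$; a standard inclusion computation then shows that once $k(n)$ is large enough relative to $\delta$ and $\delta'$, the coupled event intersects the ``correct-output'' events in both runs with positive probability, yielding a contradiction.

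The principal obstacle will be the second step: formalizing the coupling so that the aligned-view event holds with probability bounded away from $0$ uniformly in $n$, given that the two configurations differ in the initial state of \emph{every} node. The delicate point is to use the candidates as synchronization anchors and the inter-candidate segments as ``quiet'' regions in which, by induction on rounds, a finite-state node cannot reliably propagate the distinction between all-zero and all-one input neighborhoods past a bounded distance within the algorithm's termination time. This is the essential randomized symmetry-breaking obstruction that mirrors, and strengthens, the classical Itai-Rodeh impossibility for rings.
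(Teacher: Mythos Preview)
Your proposal has a genuine gap at the coupling step, and it stems from the choice of the two configurations. In $C_0$ every node has input $0$ and in $C_1$ every node has input $1$, so the initial state of \emph{every} node differs between the two runs. Even if you could couple the message traces arriving at some candidate to coincide, that candidate still starts in different states in $C_0$ and $C_1$ (it knows its own input), so nothing forces it to produce the same output. Your parenthetical patch (``its own input happens to match in the neighborhood we align the coupling to'') cannot be realized: no node's input matches across $C_0$ and $C_1$. More broadly, the trivial protocol ``output your own input'' is correct on both $C_0$ and $C_1$, so any argument that only compares these two instances is doomed. The pigeonhole/union-bound sketch is also too vague: the trace seen by a candidate after $t$ rounds depends on the randomness of $\Theta(t)$ nodes, and you have no a priori bound on the termination time $t$ in terms of constants of $\mathcal{A}$ alone.

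The paper's proof avoids this by working on a \emph{single} instance of $L_n$ with mixed inputs. It first looks at $L_2$ with both inputs equal to $b$ and extracts, for each $b\in\{0,1\}$, a concrete finite transcript of length $\ell_b$ that occurs with some constant probability $p_b>0$ and ends with output $b$. It then builds, inside $L_n$, a ``$Q_b$-gadget'' of $2\ell_b+2$ consecutive nodes all with input $b$ and alternating candidate marks, so that the two middle nodes are, for $\ell_b$ rounds, indistinguishable from the $L_2$ instance; with probability $p_b^{2\ell_b+2}$ they reproduce that transcript and output $b$. Concatenating a $Q_0$-gadget and a $Q_1$-gadget gives a constant-size block that fails (produces both outputs) with some constant probability $q>0$, and one then embeds $\Theta(1/q)$ disjoint such blocks in $L_n$ (using $O(1)$ candidates each, hence at most $k(n)$ candidates total for large $n$) to drive the success probability to $0$. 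The crucial ideas you are missing are (i) fixing a specific short terminating transcript on a tiny instance, which gives a constant time horizon independent of $n$, and (ii) placing regions of different inputs side by side in one instance rather than coupling two globally distinct instances.
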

\begin{proof}
Assume by contradiction that there exists such an algorithm $\mathcal{A}$ and
let $\Sigma$ denote its message alphabet.
For
$b = 0, 1$,
consider the execution of $\mathcal{A}$ on an instance that consists of path
$L_{2}$, where node $v_{1}$ is a candidate, node $v_{2}$ is not a candidate, and
$\mathrm{in}(v_{1}) = \mathrm{in}(v_{2}) = b$.
By definition, there exist constants
$p_{b} > 0$
and
$\ell_{b}$
and
message sequences
$S_{b, 1}, S_{b, 2} \in \Sigma^{\ell_{b}}$
such that when $\mathcal{A}$ runs on this instance, with probability at least
$p_{b}$, node $v_{j}$,
$j \in \{ 1, 2 \},$
reads message $S_{b, j}(t)$ in its (single) port in round
$t = 1, \dots, \ell_{b}$
and outputs
$\mathrm{out}(v_{j}) = b$
at the end of round $\ell_{b}$.

Now, consider graph $L_{n}$ for some sufficiently large $n$ (whose value will
be determined later on) and consider a subgraph of $L_{n}$, referred to as a
\emph{$Q_{b}$-gadget}, that consists of
$2 \ell_{b} + 2$
contiguous nodes
$v_{1}, \dots, v_{2 \ell_{b} + 2}$
of the underlying path $L_{n}$, all of which receive input
$\mathrm{in}(v_{i}) = b$.
Moreover, the nodes
$v_{1}, \dots, v_{2 \ell_{b} + 2}$
are marked as candidates in an alternating fashion so that if $v_{i}$ is a
candidate, then
$v_{i + 1}$
is not a candidate, constrained by the requirement that
$v_{\ell_{b} + 1}$
is a candidate (and
$v_{\ell_{b} + 2}$
is not).
The key observation is that when $\mathcal{A}$ runs on $L_{n}$, with
probability at least
$q_{b} = p_{b}^{2 \ell_{b} + 2}$,
the nodes
$v_{\ell_{b} + 1}$
and
$v_{\ell_{b} + 2}$
of the $Q_{b}$-gadget read messages $S_{b, 1}(t)$ and $S_{b, 2}(t)$,
respectively, in (all) their ports in round
$t = 1, \dots, \ell_{b}$
and output $b$ at the end of round $\ell_{b}$, independently of the random
bits of the nodes outside the $Q_{b}$-gadget.

Fix
$\ell = \ell_{0} + \ell_{1} + 2$
and define a \emph{$Q$-gadget} to be a subgraph of $L_{n}$ that consists of a
$Q_{0}$-gadget appended to a $Q_{1}$-gadget, so, in total, the $Q$-gadget is a
(sub)path that contains
$2 \ell_{0} + 2 \ell_{1} + 4 = 2 \ell$
nodes, $\ell$ of which are candidates.
Following the aforementioned observation, when $\mathcal{A}$ runs on $L_{n}$,
with probability at least
$q = q_{0} \cdot q_{1}$,
some nodes in the $Q$-gadget output $0$ and others output $1$;
we refer to this (clearly invalid) output as a \emph{failure} event of the
$Q$-gadget.

Since $p_{0}$, $p_{1}$, $\ell_{0}$, and $\ell_{1}$ are constants that depend
only on $\mathcal{A}$,
$\ell = \ell_{0} + \ell_{1} + 2$,
$q_{0} = p_{0}^{2 \ell_{0} + 2}$
and
$q_{1} = p_{1}^{2 \ell_{1} + 2}$
are also constants that depend only on $\mathcal{A}$, and thus
$q = q_{0} \cdot q_{1}$
is also a constant that depends only on $\mathcal{A}$.
Take $z$ to be an arbitrarily large constant.
If $n$ is sufficiently large, then we can embed
$y = \lceil z / q \rceil$
pairwise disjoint $Q$-gadgets in $L_{n}$.
Indeed, these $Q$-gadgets account to a total of
$\ell \cdot y$
candidates and recalling that $z$, $q$, and $\ell$ are constants, this number
is smaller than
$k = k(n)$
for sufficiently large $n$.
When $\mathcal{A}$ runs on $L_{n}$, each of these $y$ $Q$-gadgets fails with
probability at least $q$ (independently).
Therefore, the probability that all nodes return the same binary output is at
most
$(1 - q)^{y}$.
The assertion follows since this expression tends to $0$ as
$y \rightarrow \infty$
which is obtained as
$z \rightarrow \infty$.
\end{proof}

The proof of \Lem{}~\ref{lem:negative-binary-consensus} essentially shows
that no SA algorithm can distinguish between $L_{2}$ and $L_{n}$ with a
bounded failure probability.
When the path is augmented with self-loops, we can use a very similar line of
arguments to show that no \SAloops{} algorithm can distinguish between
$L^{\circlearrowleft}_{1}$ and $L^{\circlearrowleft}_{n}$ with a bounded
failure probability.
This allows us to establish the following lemma that should be contrasted with
the SA MIS algorithm of \cite{EmekW2013} that works on general topology graphs
(with no self-loops) and succeeds with probability $1$.

\begin{lemma} \label{lem:negative-mis}
There does not exist a \SAloops{} algorithm that solves the MIS problem on the
graphs in
$\{ L^{\circlearrowleft}_{n} \}_{n \geq 1}$
with a failure probability bounded away from $1$.
\end{lemma}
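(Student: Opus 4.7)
The plan is to mirror the indistinguishability argument used in \Lem{}~\ref{lem:negative-binary-consensus}, with $L^{\circlearrowleft}_{1}$ (a single node carrying a self-loop) playing the role that $L_{2}$ played there. Assume for contradiction that a \SAloops{} algorithm $\mathcal{A}$ solves MIS on $\{ L^{\circlearrowleft}_{n} \}_{n \geq 1}$ with success probability at least some constant $\epsilon > 0$, and let $\Sigma$ denote its message alphabet.

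The first step extracts a canonical ``solo'' execution. Since the only valid MIS on $L^{\circlearrowleft}_{1}$ consists of its single node $v$, the success bound guarantees constants $\ell$ and $p > 0$ together with a message sequence $S \in \Sigma^{\ell}$ such that, with probability at least $p$, the execution of $\mathcal{A}$ on $L^{\circlearrowleft}_{1}$ has $v$ reading $S(t)$ in its (single, self-loop) port at the start of round $t$, occupying a fixed state $q_{t}$, making a ``canonical'' transition that leads it to state $q_{t+1}$ and broadcasts $S(t+1)$, and finally declaring ``in MIS'' at the end of round $\ell$. Each of these canonical transitions fires with some positive conditional probability that depends only on $\mathcal{A}$.

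The second step defines a gadget of $2\ell + 2$ consecutive nodes $v_{j}, \dots, v_{j + 2\ell + 1}$ in $L^{\circlearrowleft}_{n}$ and argues inductively that, with constant probability, the two \emph{central} nodes $v_{j + \ell}$ and $v_{j + \ell + 1}$ both output ``in MIS'' at the end of round $\ell$. The inductive claim is that if, in each round $t < \ell$, every gadget node whose distance to the central pair is at most $\ell - t$ makes the canonical transition, then at the start of round $t + 1$ every gadget node whose distance to the central pair is at most $\ell - t - 1$ holds port signature $\mathbf{1}_{\{S(t+1)\}}$ in all of its (self-loop and neighbor) ports, matching the solo execution exactly. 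Because these correct transitions are independent across gadget nodes, and there are at most $(2\ell+2)\ell$ of them required, the joint event has constant probability $q > 0$ depending only on $\mathcal{A}$; and when it occurs, two adjacent nodes simultaneously enter the output set, which is not a valid MIS. Finally, for any constant $y$ and $n$ large enough we embed $y$ pairwise-disjoint gadgets in $L^{\circlearrowleft}_{n}$; their failure events are mutually independent since each depends only on the random bits of its own $2\ell + 2$ nodes, giving failure probability at least $1 - (1 - q)^{y}$, which exceeds $1 - \epsilon$ for sufficiently large $y$ and contradicts the assumed success bound.

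The main obstacle is verifying the inductive claim cleanly: one must check that messages originating outside the gadget cannot reach either central node within the first $\ell$ rounds of the execution, which is precisely why the gadget radius is chosen to be $\ell$ around each central node. The role of the self-loop is essential here, since it is what lets an interior gadget node see the same port signature $\mathbf{1}_{\{S(t)\}}$ as the solo node on $L^{\circlearrowleft}_{1}$ (provided its two path neighbors broadcast $S(t)$ as well); without self-loops, the solo graph would have no node--port interaction and the indistinguishability collapses, consistently with the fact that the original SA model admits an MIS algorithm.
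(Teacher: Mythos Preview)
Your proposal is correct and follows exactly the approach the paper itself sketches: replace $L_{2}$ by $L^{\circlearrowleft}_{1}$ in the indistinguishability argument of \Lem{}~\ref{lem:negative-binary-consensus}, use the unique ``in MIS'' output on the solo node to extract a canonical length-$\ell$ execution, and replicate it in a radius-$\ell$ gadget inside $L^{\circlearrowleft}_{n}$ so that two adjacent central nodes both enter the MIS. Your write-up in fact supplies more detail than the paper, which only gestures at the proof; the light-cone bookkeeping and the observation about why the self-loop is essential are both on point.
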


\clearpage

\bibliographystyle{alpha}
\bibliography{references}

\end{document}